%%%%%%%% ICML 2019 EXAMPLE LATEX SUBMISSION FILE %%%%%%%%%%%%%%%%%

\documentclass{article}

% Recommended, but optional, packages for figures and better typesetting:
\usepackage{microtype}
\usepackage{graphicx}
\usepackage{subfigure}
\usepackage{amsmath,graphicx}
\usepackage{times}
\usepackage{epsfig}
\usepackage{graphicx}
\usepackage{amsmath,amsthm}
\usepackage{amssymb}
\usepackage{placeins}
\usepackage{enumitem}

\usepackage{algorithm}
\usepackage[noend]{algpseudocode}

\newtheorem{theorem}{Theorem}

\newtheorem{definition}[theorem]{Definition}

\newcommand{\Z}{\mathbb{Z}}

\newcommand{\R}{\mathbb{R}}

\newcommand{\norm}[1]{\left\|#1\right\|}
\usepackage{booktabs} % for professional tables

% hyperref makes hyperlinks in the resulting PDF.
% If your build breaks (sometimes temporarily if a hyperlink spans a page)
% please comment out the following usepackage line and replace
% \usepackage{icml2019} with \usepackage[nohyperref]{icml2019} above.
\usepackage{hyperref}

% Attempt to make hyperref and algorithmic work together better:

% Use the following line for the initial blind version submitted for review:
\usepackage[accepted]{icml2019}

% If accepted, instead use the following line for the camera-ready submission:
%\usepackage[accepted]{icml2019}

% The \icmltitle you define below is probably too long as a header.
% Therefore, a short form for the running title is supplied here:
\icmltitlerunning{DAEs for Linear Inverse Problems: Improved Recovery with Provable Guarantees}

\begin{document}

\twocolumn[
\icmltitle{DAEs for Linear Inverse Problems: Improved Recovery with Provable Guarantees}

% It is OKAY to include author information, even for blind
% submissions: the style file will automatically remove it for you
% unless you've provided the [accepted] option to the icml2019
% package.

% List of affiliations: The first argument should be a (short)
% identifier you will use later to specify author affiliations
% Academic affiliations should list Department, University, City, Region, Country
% Industry affiliations should list Company, City, Region, Country

% You can specify symbols, otherwise they are numbered in order.
% Ideally, you should not use this facility. Affiliations will be numbered
% in order of appearance and this is the preferred way.
%\icmlsetsymbol{equal}{*}

\begin{icmlauthorlist}
\icmlauthor{Jasjeet Dhaliwal}{to}
\icmlauthor{Kyle Hambrook}{to}

\end{icmlauthorlist}

\icmlaffiliation{to}{Department of Mathematics, San Jose State University, San Jose, California}

\icmlcorrespondingauthor{Jasjeet Dhaliwal}{jasjeet.dhaliwal@sjsu.edu}
\icmlcorrespondingauthor{Kyle Hambrook}{kyle.hambrook@sjsu.edu}

% You may provide any keywords that you
% find helpful for describing your paper; these are used to populate
% the "keywords" metadata in the PDF but will not be shown in the document
\icmlkeywords{Compressive Sensing, Linear Inverse Problems, Generative Prior}

\vskip 0.3in
]

% this must go after the closing bracket ] following \twocolumn[ ...

% This command actually creates the footnote in the first column
% listing the affiliations and the copyright notice.
% The command takes one argument, which is text to display at the start of the footnote.
% The \icmlEqualContribution command is standard text for equal contribution.
% Remove it (just {}) if you do not need this facility.

\printAffiliationsAndNotice{}  % leave blank if no need to mention equal contribution
%\printAffiliationsAndNotice{\icmlEqualContribution} % otherwise use the standard text.

\begin{abstract}
   Generative priors have been shown to provide improved results over sparsity priors in linear inverse problems. However, current state of the art methods suffer from one or more of the following drawbacks: (a) speed of recovery is  slow;  (b) reconstruction quality  is deficient; (c) reconstruction quality is contingent on a computationally expensive process of tuning hyperparameters. In this work, we address these issues by utilizing Denoising Auto Encoders (DAEs) as priors and a projected gradient descent algorithm for recovering the original signal. We provide rigorous theoretical guarantees for our method and experimentally demonstrate its superiority over existing state of the art methods in compressive sensing, inpainting, and super-resolution. We find that our algorithm  speeds up recovery by two orders of magnitude (over 100x),  improves quality of reconstruction by an order of magnitude (over 10x), and does not require tuning hyperparameters. 
   
\end{abstract}
\section{Introduction}
\label{introduction}
Linear inverse problems can be formulated mathematically as
\begin{equation}
    y = Ax + e \nonumber
\end{equation} 
where  $y \in \R^m$ is the observed vector, $A \in \R^{m \times N}$ is the measurement process, $e \in \R^m$ is a noise vector, and $x \in \R^N$ is the original signal.  The problem is to recover the signal $x$, given the observation $y$ and the measurement matrix $A$. Such problems arise naturally in a wide variety of fields including image processing, seismic and medical tomography, geophysics, and magnetic resonance imaging. In this paper, we focus on three linear inverse problems encountered in image processing:  compressive sensing, inpainting, and super-resolution. We motivate our method using the compressive sensing problem. \\

\begin{figure}[h]
\centering
\includegraphics[width=7.5cm, height=7.5cm]{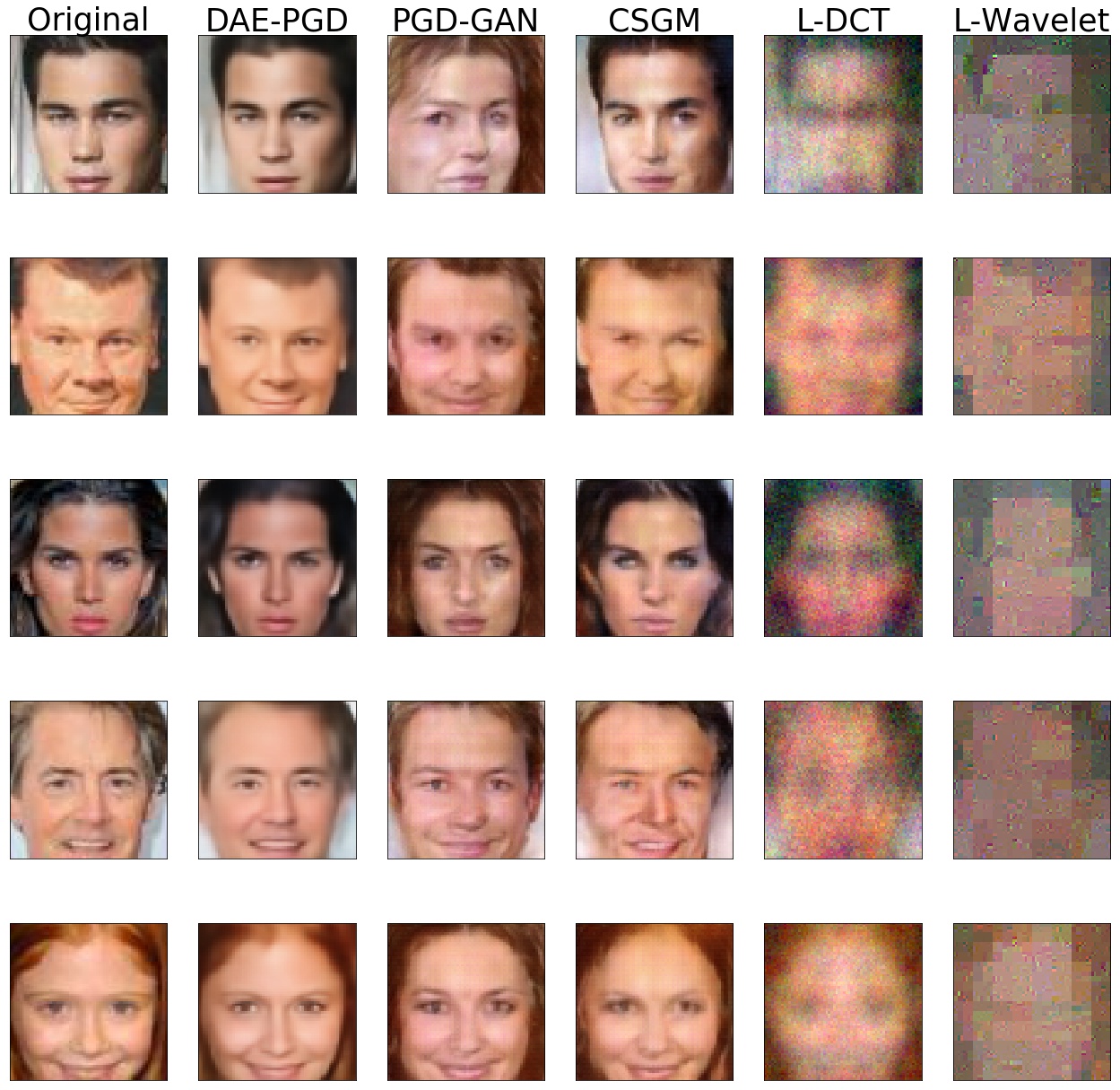}
\setlength{\belowcaptionskip}{-10pt}
\caption{Compressive sensing (w/out noise) CelebA for $m=1000$. DAE-PGD reconstructions show a 10x improvement in quality.}
\label{fig:cs_no_noise_1000}
\end{figure}

\noindent \textbf{Sparsity Prior} 
The problem of compressive  sensing  assumes the matrix $A \in \R^{m \times N}$ is fat, i.e. $m < N$.  Even when no noise is present ($y = Ax$), the system is under determined and the recovery problem is intractable. However, it has been shown that if the matrix  $A$ satisfies certain conditions such as the Restricted Isometry Property (RIP)  and if $x$ is known to be approximately sparse in some fixed basis, 
then $x$ can typically be recovered 
even when $m \ll N$ \cite{tibshirani1996regression, donoho2006compressed, candes2006stable}. 

However, sparsity (or approximate sparsity) is a very restrictive condition 
to impose on the signal as it limits the applicability of recovery methods to a small subset of input domains. In order to ease this constraint, there has been considerable effort in using other forms of structured priors such as structured sparsity \cite{baraniuk2010model}, sparsity in tree-structured dictionaries \cite{peyre2010best}, and  low-rank mixture of Gaussians \cite{chen2010compressive}. Although these efforts improve on the sparsity prior, they do not cater to signals that are not naturally sparse or structured-sparse. \\

\noindent \textbf{Generative Prior} 
Bora et al. \cite{bora2017compressed} address this issue by replacing the sparsity prior on $x$ with a generative prior. In particular, the authors first train a generative model $f: \R^k \mapsto \R^N$ with $k < N$ that maps a lower dimensional latent space to the higher dimensional ambient space. This model is referred to as the generator. Next, they impose the prior that the original signal $x$ lies in (or near) the range of $f$. Hence, the recovery problem reduces to finding the best approximation to $x$ in $f(\R^{k})$. 

It is crucial to note that the quality of the generative prior depends on how well the training set captures the data distribution. Bora et al.\cite{bora2017compressed} used a Generative Adversarial Network (GAN) as the generator, $G: \R^k \mapsto \R^N $, where $k < N$, to model the distribution of the training data and posed the following non-convex optimization problem  
\begin{equation}
\label{bora_recovery}
    \hat{z} = \underset{z \in \R^k}{\text{arg min }} \left( \|AG(z) - y\|^2 + \lambda\|z\|^2 \right)\nonumber  
\end{equation}

\noindent such that $G(\hat{z})$ is treated as the approximation to $x$. The authors provided recovery guarantees for their methods and validated the efficacy of using generative priors by showing that their method required 5-10x fewer measurements than Lasso (with a sparsity constraint) \cite{tibshirani1996regression}  while yielding the same accuracy in  recovery. However, since the problem is non-convex and requires a search over $\R^k$, it is computationally expensive and the reconstruction quality depends on the initialization vector $z \in \R^k$. 

Since then, there have been significant efforts to improve recovery results using neural networks as generative priors \cite{adler2017solving, fan2017inner,gupta2018cnn, liu2017image, mardani2018neural, metzler2017learned,mousavi2017deepcodec, rick2017one, shah2018solving,yeh2017semantic,raj2019gan,heckel2018deep}.  Shah et al. \cite{shah2018solving} extended the work of \cite{bora2017compressed} by training a generator $G$ and using a projected gradient descent algorithm that consists of a gradient descent step $w_t = x_t - \eta A^T(Ax_t - y)$ followed by a projection step $x_{t+1} = G(\underset{z \in \R^k}{\text{arg min}} \|G(z) - w_t\|^2)$. The core idea being that the estimate $w_t$ is improved by projecting it onto the range of $G$. However, since their method requires solving a non-convex optimization problem at every update step, it also leads to slow recovery.

Raj et al. \cite{raj2019gan} enhanced the results of \cite{shah2018solving} by eliminating the expensive non-convex optimization based projection step with one that is an order of magnitude cheaper. In particular, they trained a GAN $G$ to model the data distribution and also trained a  pseudo-inverse GAN $G^{\ddagger}$ that learned a mapping from the ambient space to the latent space. Next, they used the  projection step: $x_{t+1} = G(G^{\ddagger}(w_t))$. By eliminating the need to solve a non-convex optimization problem to update $x_{t+1}$, they were able to attain a significant speed up in the running time of the recovery algorithm. 

However, the recovery algorithm of \cite{raj2019gan} has two main drawbacks. First, training two networks: $G$ and $G^{\ddagger}$ makes the training process and the projection step unnecessarily convoluted. Second, their recovery guarantees only hold when the learning rate $\eta = \frac{1}{\beta}$, where $\beta$ is a RIP-style constant of the matrix $A$. Since it is NP-hard to estimate the constant $\beta$ \cite{bandeira2013certifying}, it follows that setting $\eta = \frac{1}{\beta}$ is NP-hard as well. \footnote{We observed this problem when trying to reproduce the experimental results of \cite{raj2019gan}. Specifically, we tried an exhaustive grid-search for $\eta$ but each value led to poor reconstruction quality.}. \\

\noindent \textbf{DAE Prior} 
In an effort to address the aforementioned issues, we propose to use a DAE  \cite{vincent2008extracting} prior in lieu of the generative prior introduced by Bora et al. \cite{bora2017compressed}. It has previously been shown that DAEs not only capture useful structure of the data distribution \cite{vincent2010stacked} but also implicitly capture properties of the data-generating density \cite{alain2014regularized, bengio2013generalized}. Moreover, as DAEs are trained to remove noise from vectors sampled from the input distribution, they integrate naturally with gradient descent algorithms that lead to noisy approximations at each time step. In consideration of the above, we hypothesize that DAEs are viable candidates for projection operators in a gradient descent based recovery algorithm. 

We therefore replace the generator $G$ used in Bora et al. \cite{bora2017compressed} with a DAE $F: \R^N \mapsto \R^N$ such that the range of $F$ contains the vectors from the original data generating distribution. We then impose the prior that the original signal $x$ lies in the range of $F$ and utilize Algorithm  \ref{PGD_algorithm} to recover an approximation to $x$. We provide theoretical recovery guarantees and find that our framework is able to address the shortcomings of previous works noted above. Our contributions can be summarized as:
   \icmlitem We provide rigorous theoretical guarantees for convergence in Algorithm \ref{PGD_algorithm}.
\icmlitem We experimentally demonstrate orders of magnitude (over 100x) speed up in recovery compared to state of the art methods.
   \icmlitem We experimentally demonstrate order of magnitude (over 10x) improvement in recovery quality compared to state of the art methods.

\section{Algorithm and Results}

\subsection{Notation} Given a vector $x \in \R^N$, we use $\norm{x}$ to denote the $\ell_2$-norm for $x$. Similarly, for a matrix $A \in \R^{m \times N}$, $\norm{A}$ denotes the induced matrix norm from the $\ell_2$-norm. 

\subsection{Denoising Auto Encoder}
\label{dae_intro}
A DAE is a non-linear mapping $F: \R^N \mapsto \R^N$ that can be written  as a composition of two non-linear mappings -  an encoder $E: \R^N \mapsto \R^k$ where $k < N$ and a decoder $D: \R^k \mapsto \R^N$. Therefore,  $F(x) = (D \circ E) (x)$. Given a set of $n$ samples from a domain of interest  $\{x_{i}\}_{i=1}^{n}$, the training set $X$ is created by adding  Gaussian noise to the original samples. That is,  
$X= \{x'_{i} \}_{i=1}^{n}$, where $x'_{i} = x_i + e_i$ and $e_i \sim \mathcal{N}(\mu_i,\,\sigma^{2}_{i})$. 

The loss function for training $F$ is the Mean Squared Error (MSE) loss defined as : $L_{F}(X) =\frac{1}{n} \sum_{i=1}^{n}\|F(x'_i) - x_i\|^2$. The training procedure uses gradient descent to minimize $L_{F}(X)$ with back-propagation.

\subsection{Algorithm}
\label{algo_section}
%Recall that the original problem
Recall that in the linear inverse problem 
\begin{equation}
    y = Ax + e \nonumber
\end{equation} 

\noindent our goal is to recover an approximation $\hat{x}$  to $x$ such that $\hat{x}$ lies in the range of $F$.  Thus we aim to find $\hat{x}$ such that 
\begin{equation}
    \hat{x} =  \underset{z \in F(\R^N)}{\text{arg min}} \|Az - y\|^2 \nonumber  
\end{equation}
As in \cite{shah2018solving,raj2019gan}, we use a projected gradient descent algorithm. 
Given an estimate $x_t$ at iteration $t$, 
we compute a gradient descent step for solving the unrestricted problem: $\underset{z \in \R^N}{\text{minimize}} \|Az - y\|^2$ as:
\begin{equation}
w_t \gets x_t -\eta A^{T}(Ax_t - y)\nonumber
\end{equation}
Next we project $w_t$ onto the range of $F$ to satisfy our prior:  \begin{equation}
    x_{t+1} = F(w_t) \nonumber 
\end{equation}
Note that, compared to  \cite{shah2018solving,raj2019gan}, 
the projection step does not require solving a non-convex optimization problem. 

%We approach the above problem by breaking it into two parts for any iteration $t$. First, we find a vector $w_t \in \R^N$ \textbf{that is the solution to the problem}: $\underset{z \in \R^N}{\text{minimize}} \|Az - y\|^2$. This giving us the following gradient descent update step: 
%\begin{equation}
%w_t \gets x_t -\eta A^{T}(Ax_t - y)\nonumber
%\end{equation}

%Next, using the same approach as used in \cite{shah2018solving,raj2019gan}, we add a projection step that projects $w_t$ onto the range of $F$ to satisfy our prior:  
%\begin{equation}
%    x_{t+1} = F(w_t) \nonumber 
%\end{equation}

Now suppose that the domain of interest is represented by the set $D \subseteq \R^N$. Then, given a vector $x' = x + e$, where $x \in D$, and $e \in \R^N$ is an unknown noise vector, the success of our method depends on how small the error $\|A(x') - x\|$ is. %We hope that if 
If the training set $X$ captures the domain of interest well and if the training procedure utilizes a diverse enough set of noise vectors $\{e_i\}_{i=1}^{N}$, then we expect $\|A(x') - x\|$ to be small. Consequently,  we expect  the projection step of Algorithm \ref{PGD_algorithm} to yield vectors in or close to $D$. We provide the complete algorithm below. 

\begin{algorithm}
    \caption{DAE-PGD}
    \label{PGD_algorithm}
     \textbf{Input:} $y \in \R^{m}, A \in \R^{m \times N}, f\colon \R^{N} \to \R^{N}$, $ T \in \Z_{+}, \eta \in \R_{> 0} $ \\
     \textbf{Output}: $x_T$ \\
     \begin{algorithmic}[1]
      \State $t\gets 0, x_0 \gets 0$
      \While{$t < T$}
        \State $w_t \gets x_t -\eta A^{T}(Ax_t - y)$
        \State $x_{t+1} \gets f(w_t)$
      \EndWhile\label{euclidendwhile}

    \State \textbf{return} $x_T$
        \end{algorithmic}
    \end{algorithm}

\begin{figure}[h]
\centering
\includegraphics[width=7.5cm, height=7.5cm]{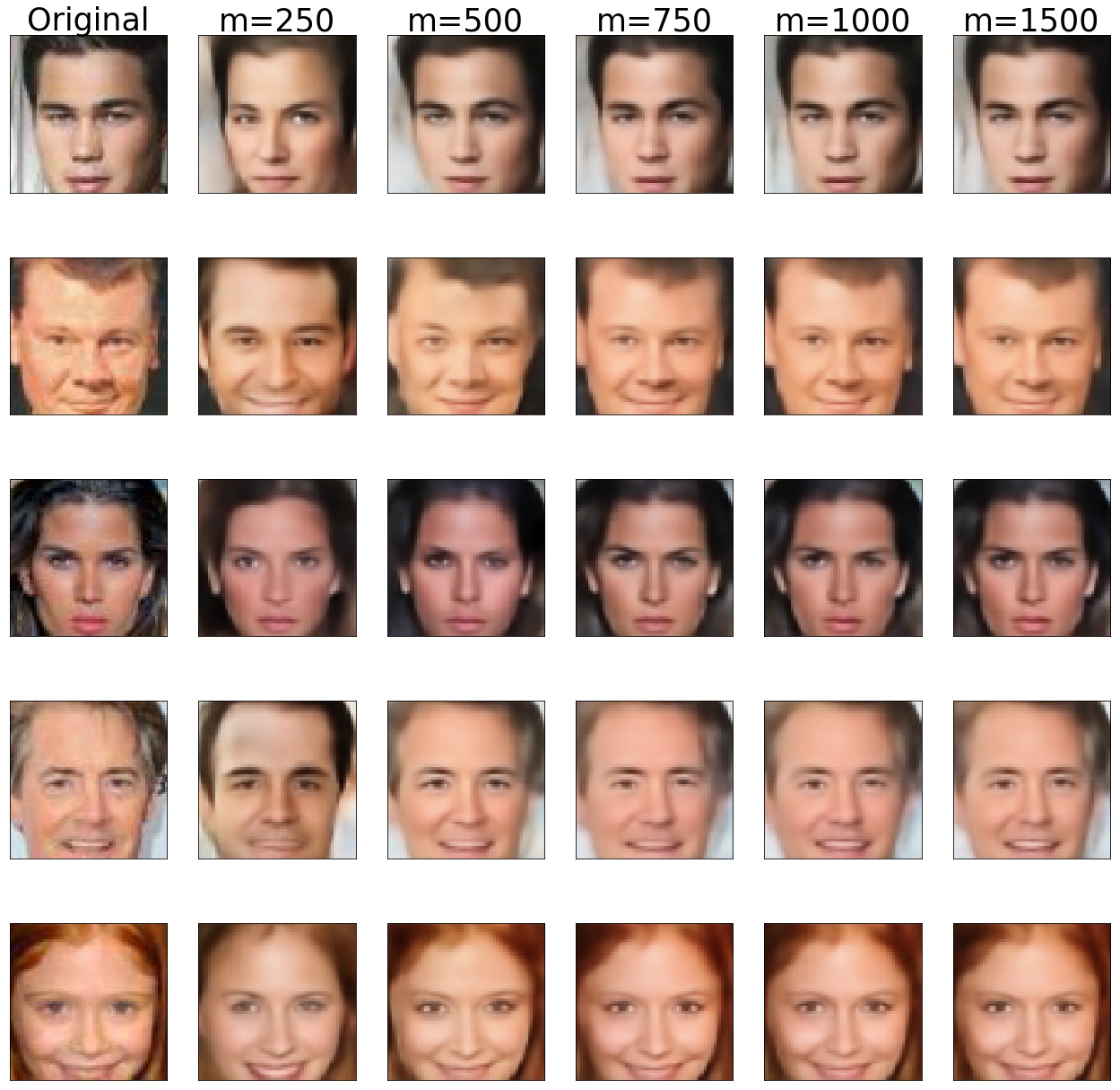}
\caption{Compressive Sensing (w/out noise) CelebA for various $m$. Reconstructions capture finer grained details as $m$ increases. }
\label{fig:cs_no_noise_m}
\end{figure}

\subsection{Theoretical Results}
We begin by introducing two  standard definitions required to provide recovery guarantees.
\begin{definition}[RIP$(S, \delta)$]
\label{S_RIP}
Given $S \subseteq \R^{N}$ and $\delta > 0$, a matrix $A \in \R^{m \times N}$ satisfies the RIP$(S, \delta)$ property if 
%, for every $x_1, x_2 \in S$, 
%the following inequality %holds:
\begin{equation}
    (1 - \delta)\norm{x_1 - x_2}^2 \leq \norm{A(x_1 - x_2)}^2 \leq (1 + \delta)\norm{x_1 - x_2}^2 \nonumber
\end{equation}
for all $x_1, x_2 \in S$. 
\end{definition}

A variation of the RIP$(S, \delta)$ property for sparse vectors was first introduced by Candes et al. in \cite{candes2005decoding} and has been shown to be a sufficient condition in proving recovery guarantees using $\ell_1$-minimization methods \cite{foucart2017mathematical}. Next, we define an Approximate Projection (AP) property and provide an interpretation that elucidates its role in the results of Theorem \ref{DAE-PGD_theorem}. \footnote{Various flavors of the AP$(S,\alpha)$ property have been used in previous works, such as Shah et al. \cite{shah2018solving} and Raj et al. \cite{raj2019gan}.}.

\begin{definition}[AP(S, $\alpha$)]
\label{approx_proj}
Let $\alpha \geq 0$.  
%$f$ be a mapping defined as %
A mapping $f \colon \R^{N} \to S \subseteq \R^{N}$ satisfies AP$(S, \alpha)$ %with 
%$\alpha \in \R_{\geq 0}$ 
if
\begin{equation}
    \norm{w - f(w)}^2 \leq \|w - x\|^2 + \alpha \|f(w) - x\| \nonumber
\end{equation}
for every $w \in \R^N$ and $x \in S$.

\end{definition}

    We now explain the significance of Def. \ref{approx_proj}. Let $x^* = \underset{z \in S}{\text{arg min }} \norm{w - z}$ and observe 
   \begin{align}
   \norm{w - f(w)}^2 \leq (\norm{w - x^*} + \norm{f(w) - x^*})^2 \label{triangle}
   \end{align}
   Hence, $\alpha \leq \norm{f(w) - x^*} + 2 \norm{w - x^*}$ is needed to ensure the RHS of Def. \ref{approx_proj} is bounded by the RHS of \eqref{triangle}. In other words, for $\alpha$ to be small, the projection error $\norm{f(w) - x^*}$ as well as distance of $w$ to $S$ need to be small. Since the DAE $F$  learns to minimize $\norm{F(w) - x^*}^2$ (Section \ref{dae_intro}), we expect a small projection error. Moreover, if the image of $F$ approximates the data distribution well,  we  expect a small value for $\norm{w - x^*}$ at every gradient descent step of Algorithm \ref{PGD_algorithm}. \footnote{A small value for $\alpha$ is verified experimentally by using the results of Theorem \ref{DAE-PGD_theorem} and observing small recovery error in our experiments. }. %Roughly speaking, this property states that given a noisy vector $w = w^* + e$,  where $w^* \in S, e \in \R^N$, the projection $f(w)$ is close to $w^*$ in $\ell_2$-norm. 

\begin{theorem}
\label{DAE-PGD_theorem}
Let $f: \R^{N} \to S \subseteq \R^{N}$ satisfy AP(S, $\alpha$) and let $A \in \R^{m \times N}$ be a matrix with $\|A\|^2 \leq M$ that satisfies RIP$(S, \delta)$. If $y = Ax$ with $x \in S$, the recovery error of Algorithm \ref{PGD_algorithm}  is bounded as:

\begin{equation}
\label{DAE-PGD_equation}
    \norm{x_T - x} \leq (2\gamma)^{T} \norm{x_{0} - x} + \alpha 
    \left( \frac{1 - (2\gamma)^{T}}{1 - (2\gamma)}\right)  
\end{equation}

where $\gamma = \sqrt{ \eta^2M(1 + \delta) +  2\eta(\delta - 1) +1 }$.
\end{theorem}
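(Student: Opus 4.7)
The plan is to derive the per-iteration estimate $\norm{x_{t+1} - x} \leq 2\gamma \norm{x_t - x} + \alpha$ and then unroll it $T$ times against the geometric series $\sum_{k=0}^{T-1}(2\gamma)^k = (1-(2\gamma)^T)/(1-2\gamma)$. This splits the argument into a gradient-step contraction, an AP-based projection bound, and a trivial induction.

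First, since $y = Ax$, the gradient step can be rewritten as $w_t - x = (I - \eta A^T A)(x_t - x)$, so expanding the squared norm gives
\begin{equation*}
\norm{w_t - x}^2 = \norm{x_t - x}^2 - 2\eta \norm{A(x_t-x)}^2 + \eta^2 \norm{A^T A(x_t-x)}^2.
\end{equation*}
Since $x \in S$ and $x_t = f(w_{t-1}) \in S$ (as $f$ maps into $S$), the RIP$(S,\delta)$ estimate applies to the difference $x_t - x$. The middle term is bounded below by $2\eta(1-\delta)\norm{x_t-x}^2$, and the last term is bounded via $\norm{A^T A v}^2 \leq \norm{A}^2 \norm{A v}^2 \leq M(1+\delta)\norm{v}^2$. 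Collecting the factors produces $\norm{w_t - x}^2 \leq \gamma^2 \norm{x_t - x}^2$, i.e.\ the contraction $\norm{w_t - x} \leq \gamma \norm{x_t - x}$.

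Next, I would apply AP$(S,\alpha)$ with $w = w_t$ and the target $x \in S$ to obtain
\begin{equation*}
\norm{w_t - x_{t+1}}^2 \leq \norm{w_t - x}^2 + \alpha \norm{x_{t+1} - x}.
\end{equation*}
Using the triangle inequality $\norm{x_{t+1} - x} \leq \norm{x_{t+1} - w_t} + \norm{w_t - x}$ on the RHS, and setting $u = \norm{x_{t+1} - w_t}$ and $v = \norm{w_t - x}$, this becomes the quadratic inequality $u^2 - \alpha u - v(v+\alpha) \leq 0$. Its discriminant simplifies as $\alpha^2 + 4v(v+\alpha) = (2v+\alpha)^2$, so the larger root is $v + \alpha$ and hence $u \leq v + \alpha$. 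A second triangle inequality gives $\norm{x_{t+1} - x} \leq 2\norm{w_t - x} + \alpha$. Combining with the first step yields $\norm{x_{t+1} - x} \leq 2\gamma \norm{x_t - x} + \alpha$, and a one-line induction on $t$ finishes the proof.

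The main obstacle is the AP step: because AP is quadratic in $\norm{w_t - f(w_t)}$ but only linear in $\norm{f(w_t) - x}$, the clean affine bound $\norm{x_{t+1} - w_t} \leq \norm{w_t - x} + \alpha$ is not immediate — it requires isolating the quadratic in $u$ and exploiting the fact that the discriminant is a perfect square. Step one is a routine RIP-plus-operator-norm calculation, and the closing induction is a standard geometric sum, so essentially all of the content sits in converting the AP inequality into a linear triangle-style bound.
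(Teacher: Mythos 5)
Your proposal is correct and reaches the same per-iteration recursion $\norm{x_{t+1}-x}\le 2\gamma\norm{x_t-x}+\alpha$ that the paper establishes, with the same key computation: the bound $\norm{(I-\eta A^{T}A)(x_t-x)}^2\le\gamma^2\norm{x_t-x}^2$ obtained from RIP$(S,\delta)$ applied to $x_t-x$ together with $\norm{A}^2\le M$, followed by the same geometric-series induction. Where you diverge is in how the AP inequality is converted into that linear recursion. The paper expands $\norm{(w_t-x)-(x_{t+1}-x)}^2$, rearranges to isolate $\norm{x_{t+1}-x}^2\le 2\langle w_t-x,\,x_{t+1}-x\rangle+\alpha\norm{x_{t+1}-x}$, applies Cauchy--Schwarz to the cross term, and divides by $\norm{x_{t+1}-x}$; you instead prove the contraction $\norm{w_t-x}\le\gamma\norm{x_t-x}$ first and then turn the AP bound into the affine estimate $\norm{x_{t+1}-w_t}\le\norm{w_t-x}+\alpha$ by substituting the triangle inequality and noting the resulting quadratic in $u=\norm{x_{t+1}-w_t}$ factors as $(u-v-\alpha)(u+v)\le 0$. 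The two middle steps are logically equivalent in outcome, but yours is slightly more robust: it cleanly separates the contraction lemma from the projection analysis and avoids the paper's division by $\norm{x_{t+1}-x}$, which strictly speaking requires a one-line remark for the degenerate case $x_{t+1}=x$. Both arguments share the same (minor, inherited-from-the-statement) gap at $t=0$, since applying RIP to $x_0-x$ with $x_0=0$ implicitly assumes $0\in S$; you at least note explicitly that $x_t\in S$ for $t\ge 1$ because $f$ maps into $S$, which the paper leaves unsaid.
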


\begin{proof}[\textbf{Proof of Theorem \ref{DAE-PGD_theorem}}]
Using the notion of Algorithm \ref{PGD_algorithm} and the fact that $f$ satisfies $AP(S, \alpha)$ we have 
\begin{align}
 %   \norm{w_t - x_{t+1}}^2 &\leq \norm{w_t - x}^2 +\alpha \norm{x_{t+1} - x} \nonumber \\
 %\therefore 
 \norm{(w_t - x) - (x_{t+1} - x)}^2 &\leq \norm{w_t - x}^2 +  \alpha \norm{x_{t+1} - x}. \nonumber 
\end{align}

\noindent Noting $\norm{a - b}^{2} = \norm{a}^2 + \norm{b}^2 - 2\langle a,b\rangle$ and re-arranging terms we get 
\begin{align}
  \norm{x_{t+1} - x}^2 \leq  2\langle (w_t - x), (x_{t+1} - x) \rangle + \alpha \norm{x_{t+1} - x}. \nonumber 
\end{align}

\noindent  Now we expand the inner product using $w_t =x_t - \eta A^{T}(Ax_t - y)$ and $y = Ax$ to get
\begin{align}
\label{angle_inquality}
 \norm{x_{t+1} - x}^2 &\leq  2\langle (I - \eta A^{T}A)(x_t - x), (x_{t+1} - x) \rangle \nonumber \\ &+ \alpha \norm{x_{t+1} - x}.
\end{align}

\noindent  Using the Cauchy–Schwarz inequality we have
\begin{align}
\label{dot_product_bound}
    &|\langle (I - \eta A^{T}A)(x_t - x), (x_{t+1} - x) \rangle| \nonumber \\ &\leq  \norm{(I -  \eta A^{T}A)(x_t - x)} \norm{(x_{t+1} - x)}  
\end{align}

\noindent By setting $u = x_t - x$,  expanding, 
%expanding $\norm{(I -  \eta A^{T}A)(x_t - x)}^2$, 
and using the RIP$(S,\alpha)$  property of $A$, we see that
\begin{align}
    \norm{(I - \eta A^{T}A)u}^2 
    %&= \langle (I + (\eta A^TA)^2 - 2\eta A^TA)u, u \rangle \nonumber \\
    = 
    &\norm{u}^2 -2\eta \norm{Au}^2 + \eta^2 \norm{A^T(Au)}^2 \nonumber \\
    \leq &\|u\|^2 - 2\eta (1 - \delta)\|u\|^2 \nonumber \\
    &+  \eta^2 (1+\delta)M\|u\|^2 \nonumber \\
    =
    %\left( 1 + 2\eta(\delta - 1) + \eta^2M(1 + \delta) \right)\|u\|^2 \nonumber \\
    %\therefore  %\norm{(I - \eta A^{T}A)u} 
    %&\leq 
    &\gamma^2 \norm{u}^2 \label{beta_eq}
\end{align}

\noindent We  substitute the results of \eqref{dot_product_bound} and \eqref{beta_eq} into \eqref{angle_inquality} and divide both sides by $\norm{x_{t+1} - x}$ to get  
\begin{align}
\label{inductive_arg}
     \norm{x_{t+1} - x} \leq  2\gamma\norm{x_t - x} + \alpha
\end{align}

\noindent  
%Note if $\norm{x_{t+1} - x} = 0$, \eqref{inductive_arg} is trivially satisfied. 
Using induction on \eqref{inductive_arg} gives \eqref{DAE-PGD_equation}.
\end{proof}

%\textbf{Jay: In the statement and proof of this theorem, is there any reason to use the notation $x^{*}$ rather than $x$?}

Theorem \ref{DAE-PGD_theorem} tells us that, if $\gamma < \frac{1}{2}$, then for large $T$, the recovery error is essentially 
$\alpha/(1-2\gamma)$. Note that the requirement $\gamma < \frac{1}{2}$ is satisfied for a large range of values of $\eta$ as long as $\delta$ is sufficiently small \footnote{For instance, random Gaussian matrices yield small values for $\delta$ with high probability \cite{foucart2017mathematical}}. Hence, as long as the value of $\alpha$ is small, we expect to see a small recovery error.

%\textbf{Jay: Have you explained anywhere why DAE should satisfy AP? \color{Not why. Added that in now after Def \ref{approx_proj}}. Do Raj and Shah have AP conditions? \color{Yes, refer to comment after Def \ref{approx_proj} }}

We now compare the above results to Theorem 1 of \cite{raj2019gan} and Theorem 2.2 of \cite{shah2018solving}. As mentioned in Section \ref{introduction}, convergence in Theorem 1 of \cite{raj2019gan} is only guaranteed when $\eta = \frac{1}{\beta}$, which is a much more restrictive condition on $\eta$ than Theorem \ref{DAE-PGD_theorem} provides. In fact, $\beta$ is a RIP-style constant that is NP-hard to find \cite{bandeira2013certifying} which makes setting the value of $\eta = \frac{1}{
\beta}$ NP-hard as well.
%\textbf{Jay: Is it possible to not use $\beta$ here, since it collides with our $\beta$?}. 
Even though the results of Theorem 2.2 from \cite{shah2018solving} require a less restrictive constraint on $\eta$, their guarantees only hold for random Gaussian matrices. Moreover, they require $\norm{A}^2 \leq \omega$, where $\omega$ is a RIP-style constant for $A$. Once again, it is NP-hard to estimate $\omega$, hence making the constraint very strict. In contrast, the results of Theorem \ref{DAE-PGD_theorem} apply to arbitrary matrices that satisfy the RIP-$(S,\delta)$ property and without imposing a strict condition on $\norm{A}^2$.

\begin{figure}[h]
\centering
\includegraphics[width=7.5cm, height=7.5cm]{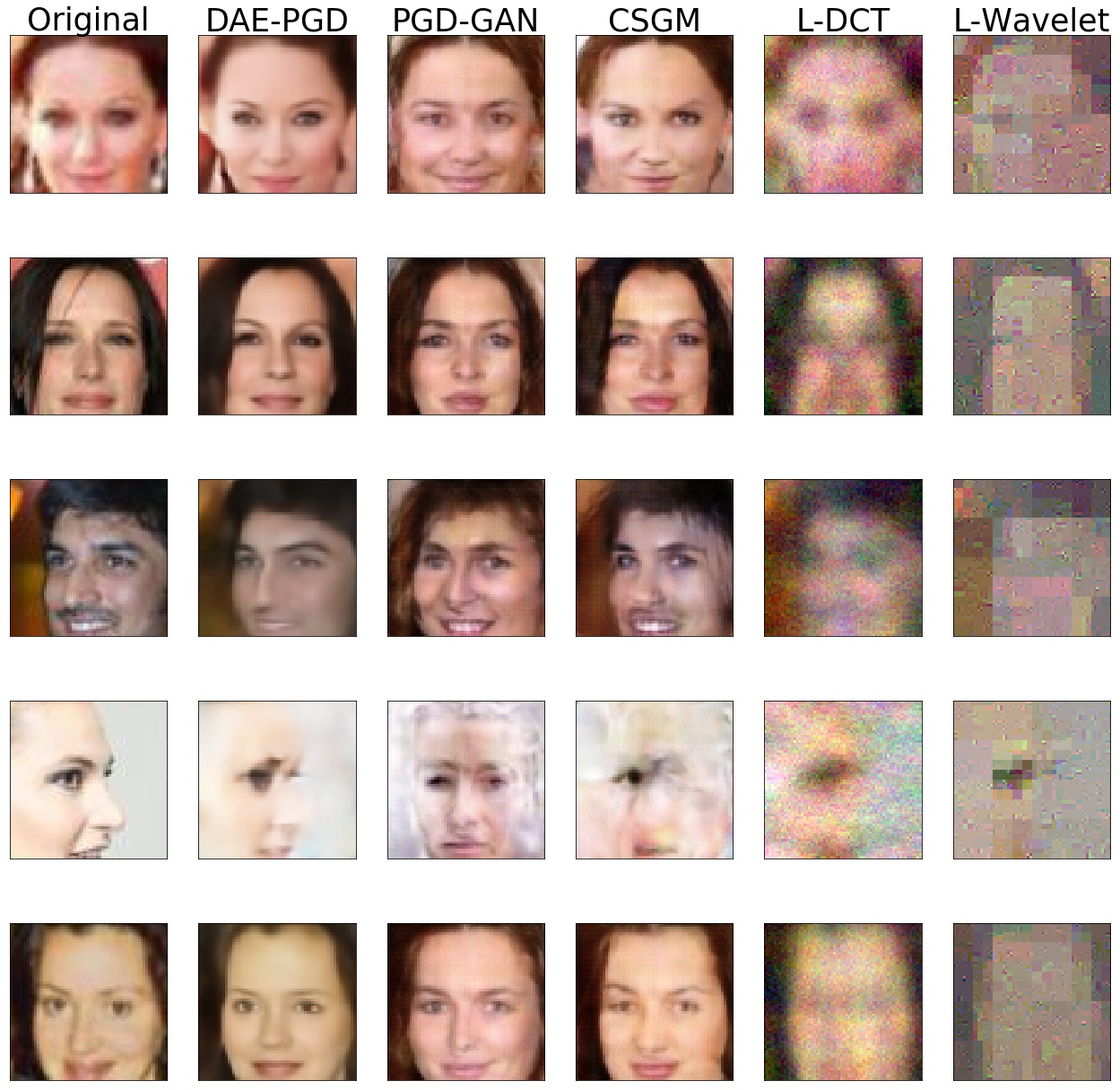}
\caption{Compressive Sensing (with noise) CelebA for $m=1000$. DAE-PGD reconstructions show a 10x improvement in quality.}
\label{fig:cs_noise_1000}
\end{figure}

\section{Experiments}
\begin{figure*}[h]
\begin{tabular}{c c c}
{\includegraphics[width = 5.5cm, height=3.5cm]{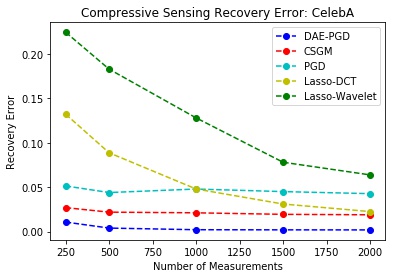}} &
{\includegraphics[width = 5.5cm, height=3.5cm]{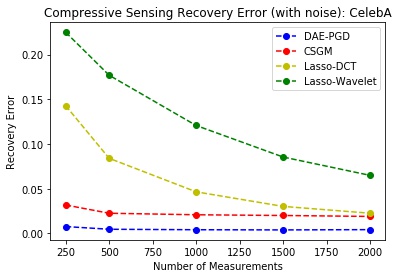}} &
{\includegraphics[width = 5.5cm, height=3.5cm]{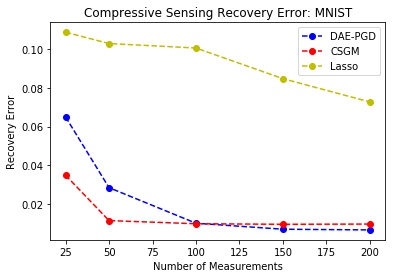}}
\end{tabular}
\caption{Compressive Sensing recovery error: $\norm{x - \hat{x}}^2$. Left: CelebA without noise - DAE-PGD shows over 10x improvement. Middle: CelebA with noise -  DAE-PGD shows over 10x improvement. Right: MNIST without noise - DAE-PGD beats CSGM for $m > 100$.}
\label{fig:recovery_error}
\end{figure*}

We provide experimental results for the problems of compressive sensing,  inpainting, and super-resolution. We refer to the results of Algorithm \ref{PGD_algorithm} as DAE-PGD and compare its results to the methods of Bora et al. \cite{bora2017compressed} which we refer to as CSGM, and  Shah et al. \cite{shah2018solving}\footnote{PGD-GAN results are only provided for compressive sensing on CelebA as per \cite{shah2018solving}.}, which we refer to as PGD-GAN. Although the work of Raj et al. \cite{raj2019gan} is the closest to our method, we do not include comparisons to their work as we were unable to reproduce their results \footnote{We used their code, their trained models, their recovery algorithm, and a grid search for $\eta$ but the reconstructed images were of very poor quality. We also reached out to the authors but they did not have the exact values of $\eta$ that were used in their experiments. }.

\subsection{Setup}

\noindent \textbf{Datasets} 
Our experiments are conducted on the MNIST  \cite{lecun1998mnist} and CelebA \cite{liu2015deep} datasets. The MNIST dataset consists of $28 \times 28$ greyscale images of digits with 50,000 training and 10,000 test samples. We report results for a random subset of the test set. The CelebA dataset consists of more than 200,000 celebrity images. We pre-processes each image to a size of $64 \times 64 \times 3$ and use the first 160, 000 images as the training set and a random subset of the remaining 40,000+ images as the test set. \\

\noindent \textbf{Network Architecture}
The network architectures for our DAEs are inspired by the Variational Auto Encoder architecture from Fig 2. of \cite{hou2017deep} with a few key changes. We replace the Leaky Relu activation with Relu, we  add the two outputs of the encoder to get the latent representation $z$, and we alter the kernel sizes as well as the convolution strides of the network as described in Table \ref{network_arc}.

\noindent \textbf{Training}
We use the Adam optimizer \cite{kingma2014adam} to minimize the  MSE loss function with learning rate 0.01 and a batch size of 128. We train the CelebA network for 400 epochs and the MNIST network for 100 epochs. 

In an effort to ensure that $\|A(x') - x\|$ defined in Section \ref{algo_section} is small, we split the training set into 5 equal sized subsets. For each distinct subset, we sample the noise vectors from a Gaussian distribution $\mathcal{N}(\mu,\,\sigma^{2})$ with a distinct value for $\sigma$ for each subset. The five different values for $\sigma$ that we use are $\{0.25, 0.5, 0.75, 1.0, 1.25\}$. 

All of our experiments were conducted on a Tesla M40 GPU with 12 GB of memory using  Keras \cite{chollet2015} and Tensorflow \cite{tensorflow2015-whitepaper} libraries. The code to reproduce our results is  available  \href{https://github.com/anonymous194758/dae_pgd}{here}.

\subsection{Compressive Sensing}
\label{cs_experiments}
We consider the problem of compressive sensing without noise: $y = Ax$ and with noise: $y = Ax + e$, with $e \sim \mathcal{N}(0, 0.25)$. We  use $m$ to denote the number of observed measurements in our results (i.e. $y \in \R^m$).  As done in previous works \cite{bora2017compressed, shah2018solving, raj2019gan}, the matrix $A \in \R^{m \times N}$ is chosen to be a random Gaussian matrix with $A_{ij} \sim \mathcal{N}(0, \frac{1}{m})$. Finally, we set the learning rate of Algorithm \ref{PGD_algorithm} as $\eta = 1$. Note that in both (with and w/out noise) cases, we also include recovery results for the Lasso algorithm \cite{tibshirani1996regression}  with a DCT basis (L-DCT) and  with a wavelet basis (L-Wavelet).

We begin with CelebA without noise. Figure \ref{fig:cs_no_noise_1000} provides a qualitative comparison of reconstruction results for $m = 1000$. We observe that DAE-PGD provides the best quality reconstructions and is able to reproduce even fine grained details of the original images such as eyes, nose, lips, hair, texture, etc. Indeed the high quality reconstructions support the case that the DAE has a small $\alpha$ as per Def. \ref{approx_proj}. For a quantitative comparison, we turn to Figure \ref{fig:recovery_error} which plots the average squared reconstruction error $\|x - \hat{x}\|^2$ for each algorithm at different values of $m$. Note that DAE-PGD provides more than 10x improvement in the squared reconstruction error. 

In order to capture how the quality of reconstruction degrades as the number of measurements decrease, we add Figure \ref{fig:cs_no_noise_m}, which shows reconstructions for different values of $m$. We observe that even though reconstructions with a small number of measurements capture the essence of the original images, the fine grained details are captured only as the number of measurements increase. 
\begin{table}[t]
\centering
\begin{tabular}{|c| c|c|c|c|c|}
\hline
{\bf m}  & {\bf CGSM} & {\bf PGD-GAN} & {\bf DAE-PGD} &{\bf Speedup}\\
\hline
250 & 53.78 & 48.40& 0.07 & 692x\\
\hline
500 & 59.81& 48.46&0.09 & 538x\\
\hline
1000 &81.08 & 48.46 & 0.11 &440x \\
\hline
1500 & 92.68& 48.50& 0.14 & 346x\\
\hline
2000 & 107.41 & 48.56 & 0.21 &230x \\
\hline
\end{tabular}
\caption{Average running times (in seconds) for the Compressive Sensing problem (w/out noise) on the CelebA dataset.  }
\label{tab:speedup}
\end{table}

\begin{figure*}[h]
\begin{tabular}{c@{\hskip 2cm} c@{\hskip 2cm} c}
\centering
{\includegraphics[width = 4.5cm, height=4.5cm]{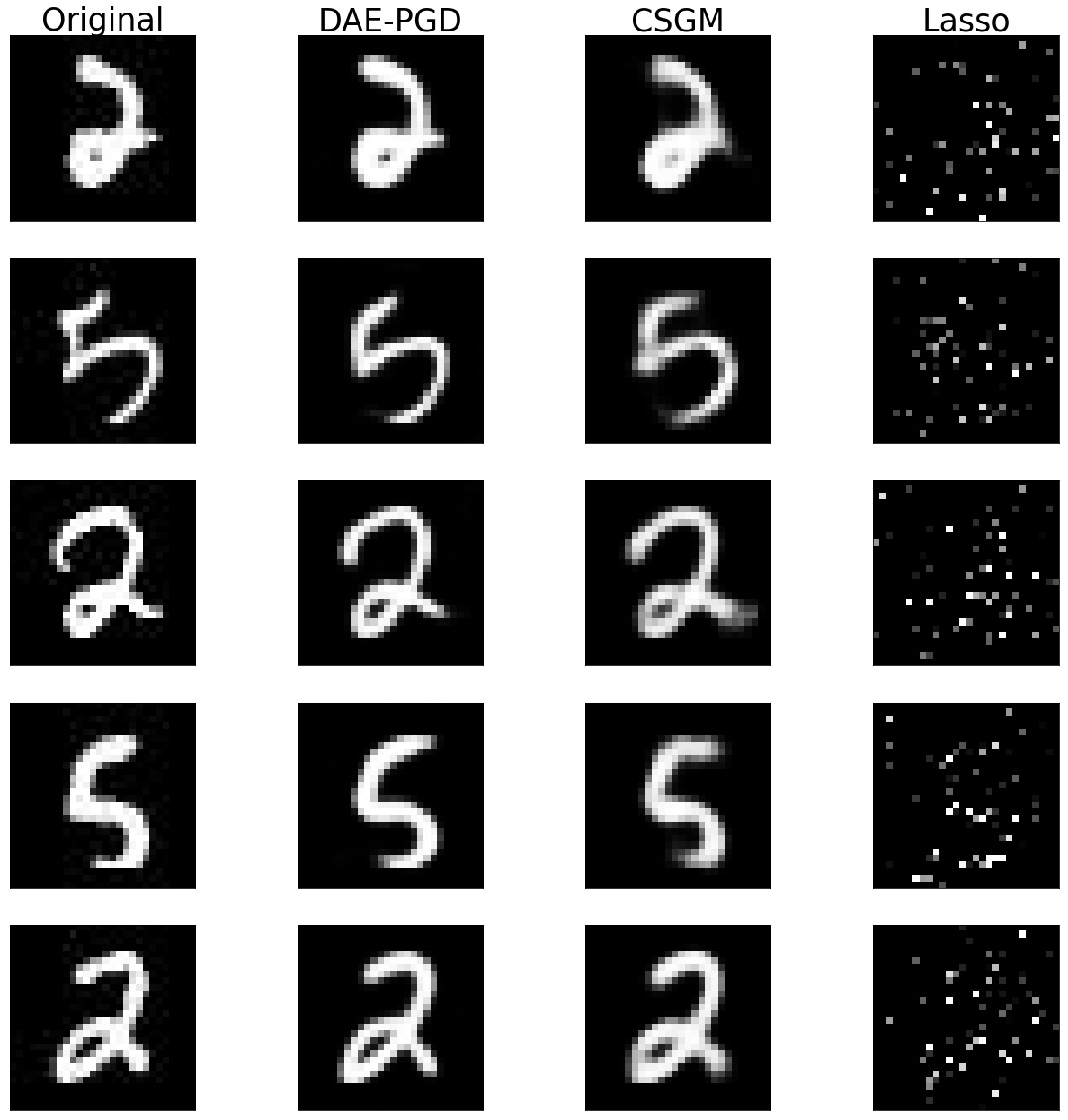}} & 
{\includegraphics[width = 4.5cm, height=4.5cm]{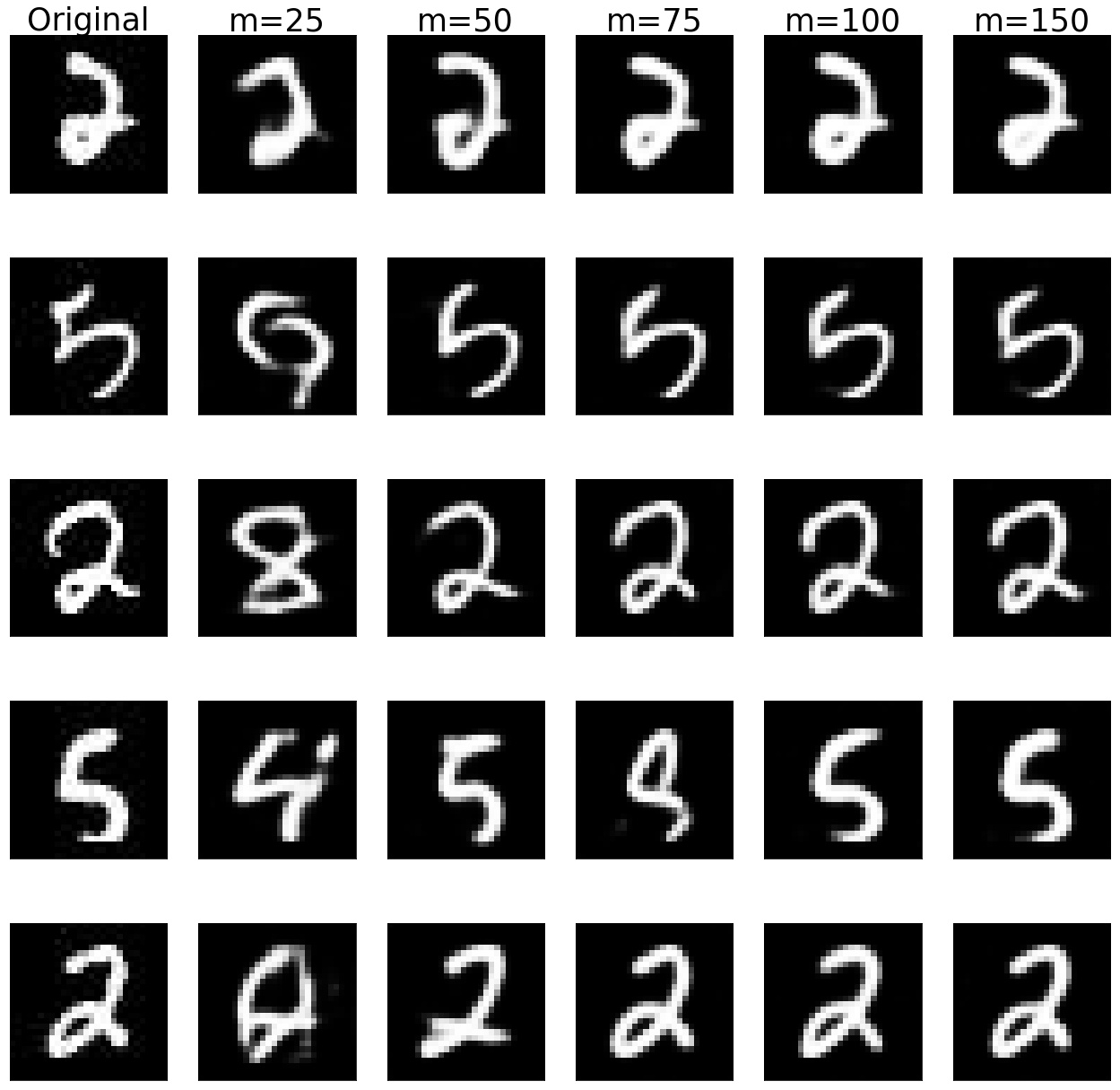}} &
{\includegraphics[width = 4.5cm, height=4.5cm]{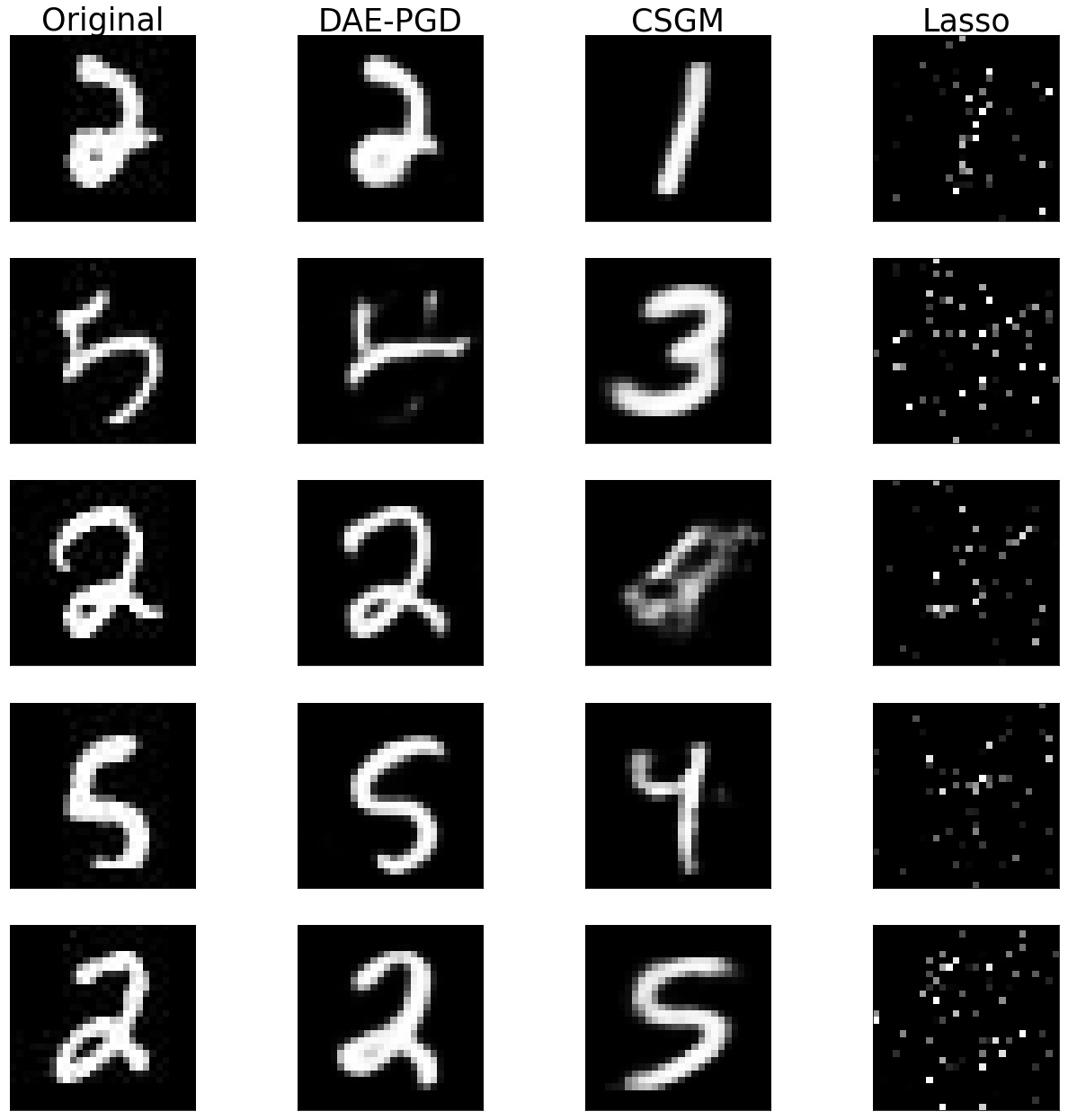}}
\end{tabular}
\caption{Compressive Sensing MNIST. Left: Reconstructions for $m = 100$ without noise. Middle: DAE-PGD reconstructions for different $m$. Right: Reconstructions for $m = 100$ with noise. DAE-PGD and CSGM yield high reconstruction quality for the no-noise case but DAE-PGD outperforms both Lasso and CSGM in the presence of noise. }
\label{fig:mnist_cs}
\end{figure*}

We now turn to the speed of reconstruction. Table \ref{tab:speedup} shows that our method provides speedups of over 100x as compared to PGD-GAN and CSGM \footnote{CSGM is executed for 500 max iterations with 2 restarts and PGD-GAN is executed for 100 max iterations and 1 restart.}.

Next, we provide qualitative reconstruction results for CelebA with additive noise in Figure \ref{fig:cs_noise_1000} note that DAE-PGD clearly outperforms other methods. Moreover, we find that the reconstructions of DAE-PGD once again capture fine-grained details despite the presence of noise in the measurements.  We perform a similar comparison for the MNIST dataset and report results in Figures \ref{fig:recovery_error} and \ref{fig:mnist_cs}.

\subsection{Inpainting}
Inpainting is the problem of recovering the original image, given an occluded version of it. Specifically, the observed image $y$ consists of occluded (or masked ) regions created by applying a pixel-wise mask $A$ to the original image $x$. We use $m$ to refer to the size of mask that occludes a $m \times m$ region of the original image $x$.

We present recovery results for CelebA with $m=10$ in Figure \ref{fig:inpaint_all} and observe that DAE-PGD is able to recovery a high quality approximation to the original image and outperforms CSGM in all cases. Figure \ref{fig:inpaint_all} also captures how recovery is affected by different mask sizes. As in the compressive sensing problem, we find that DAE-PGD reconstructions capture the fine-grained details of each image. Figure \ref{fig:inpaint_all} also reports the result for the MNIST dataset. Even though DAE-PGD outperforms CSGM, we see that the recovery quality of DAE-PGD degrades considerably when $m=15$. We hypothesize this is due to the structure of MNIST images. In particular, since MNIST images are grayscale with most of the pixels being black, putting a $15 \times 15$ black patch on the small area displaying the number makes the reconstruction problem considerably more difficult. This causes considerable degradation in reconstruction quality for larger mask sizes.

\begin{figure*}[thb]
\begin{tabular}{c c c c}
\centering
{\includegraphics[width = 4.2cm, height=4.2cm]{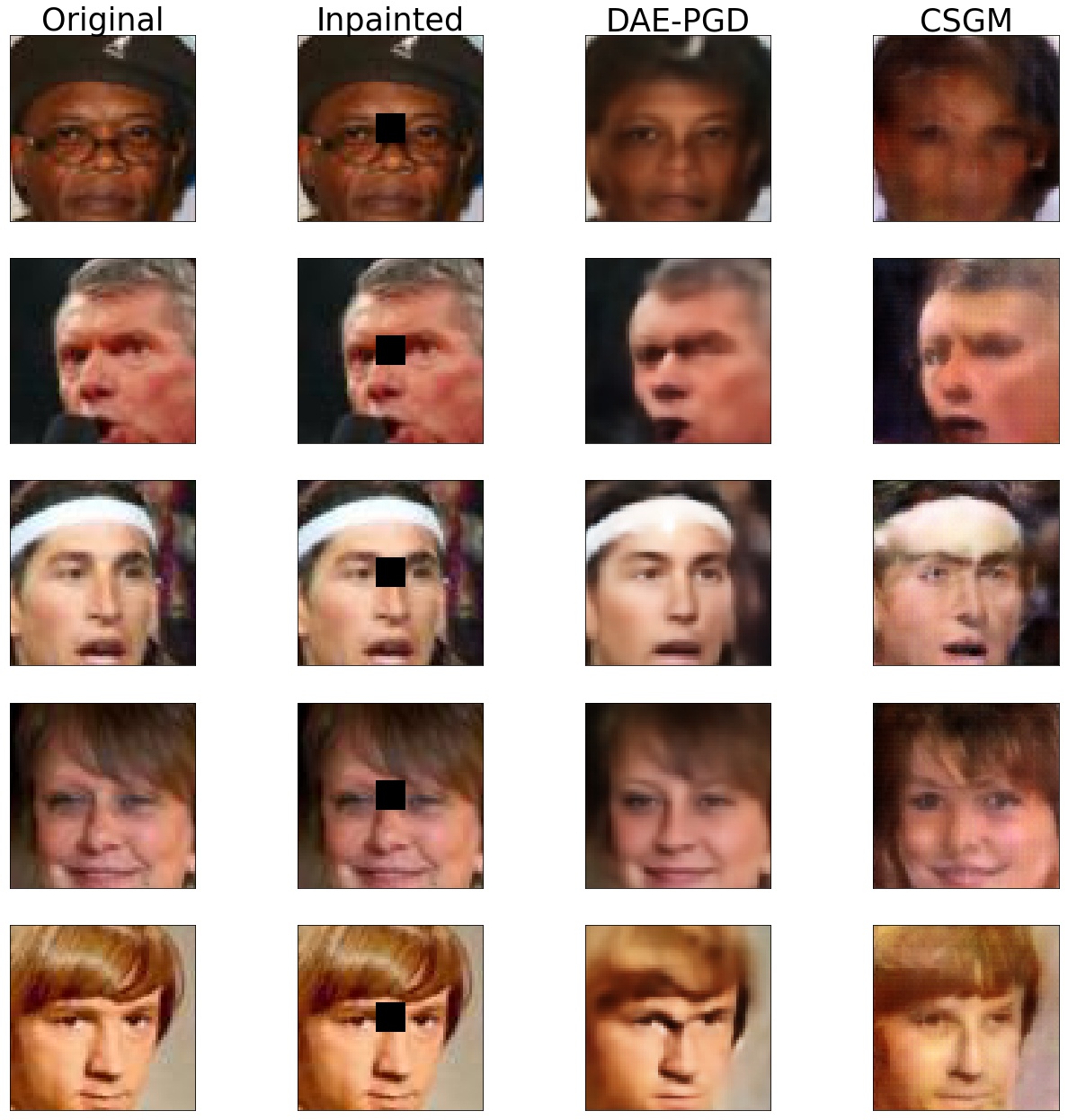}} & 
{\includegraphics[width = 4.2cm, height=4.2cm]{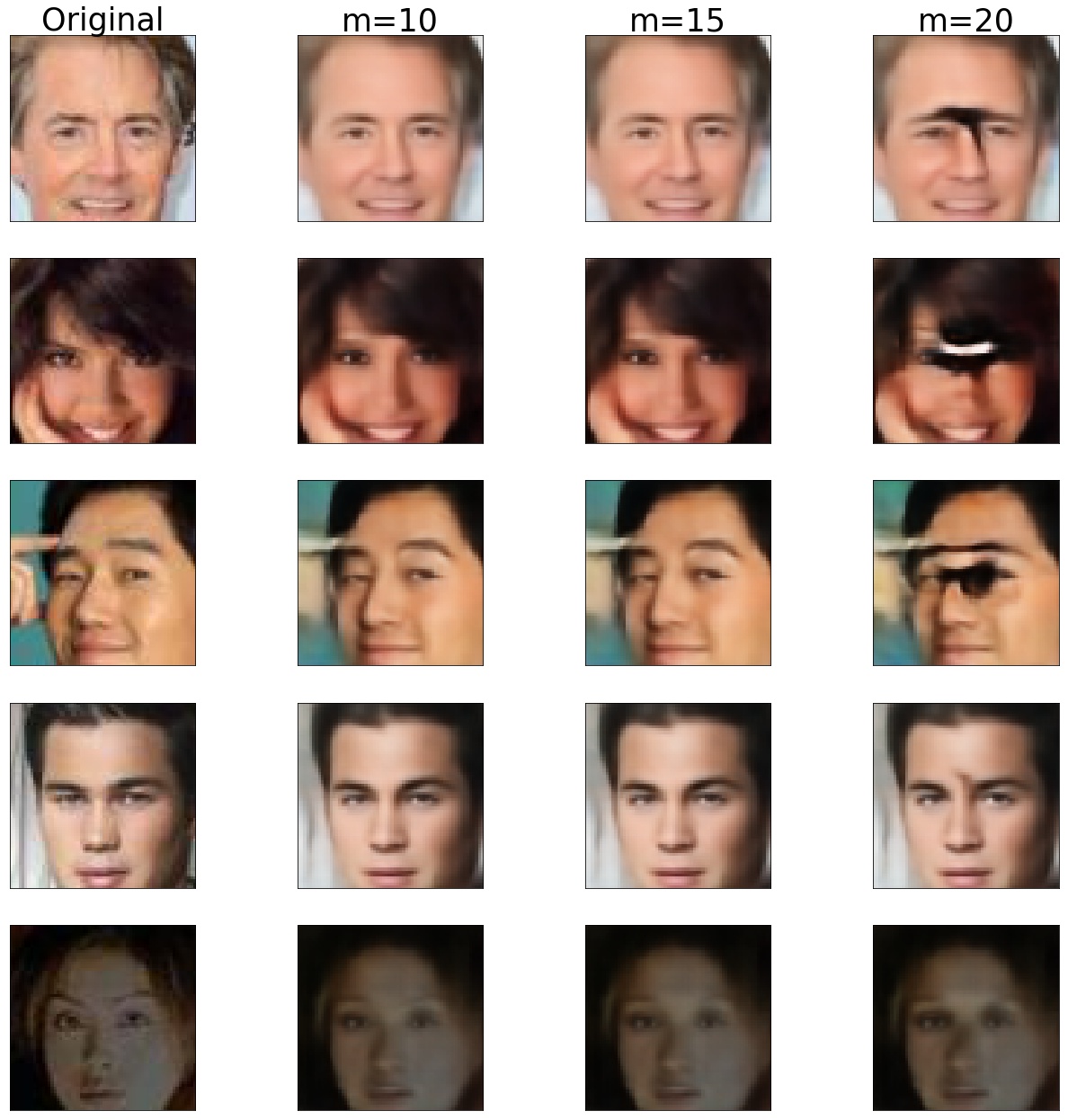}} &
{\includegraphics[width = 4.2cm, height=4.2cm]{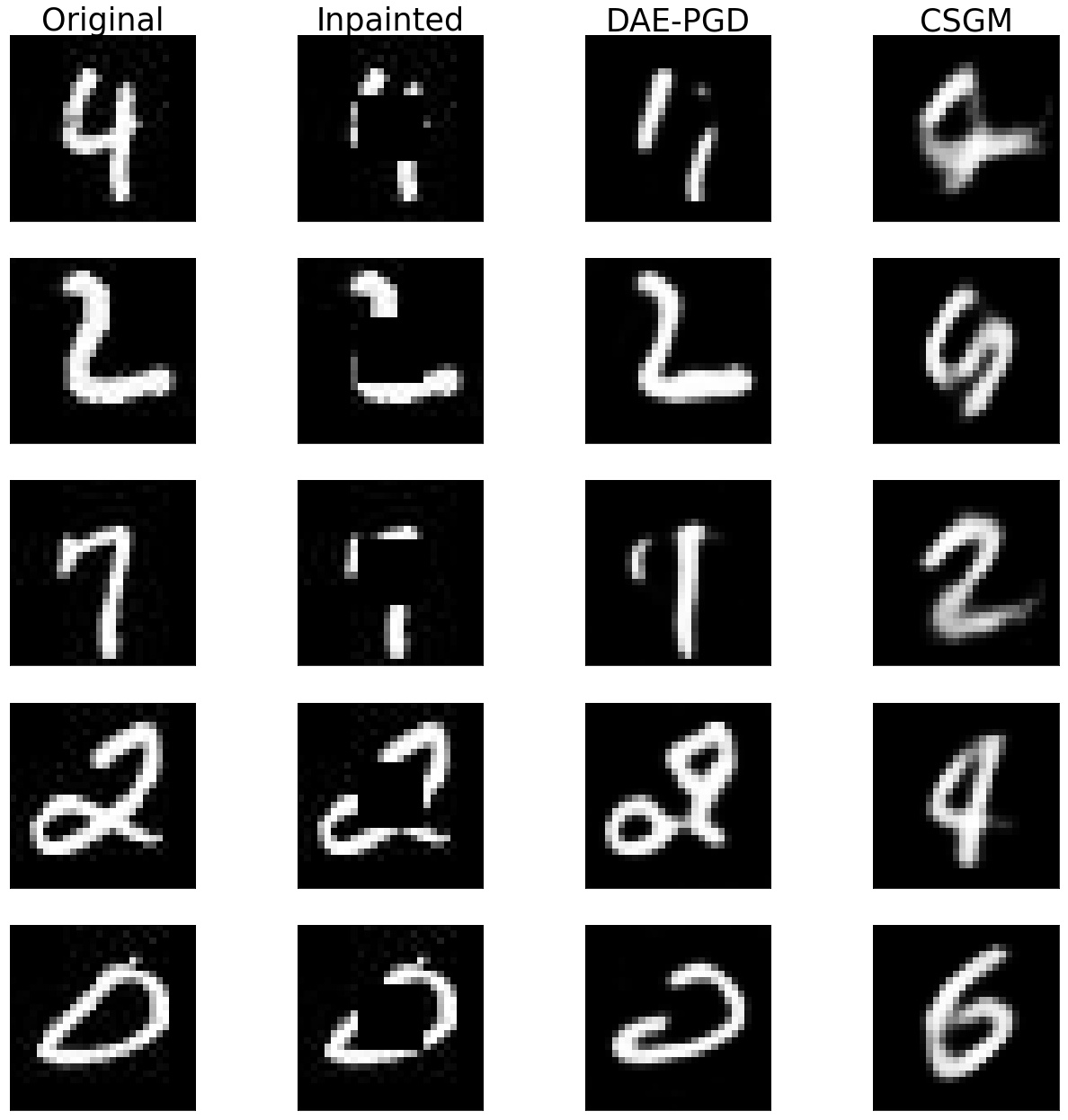}} &
{\includegraphics[width = 4.2cm, height=4.2cm]{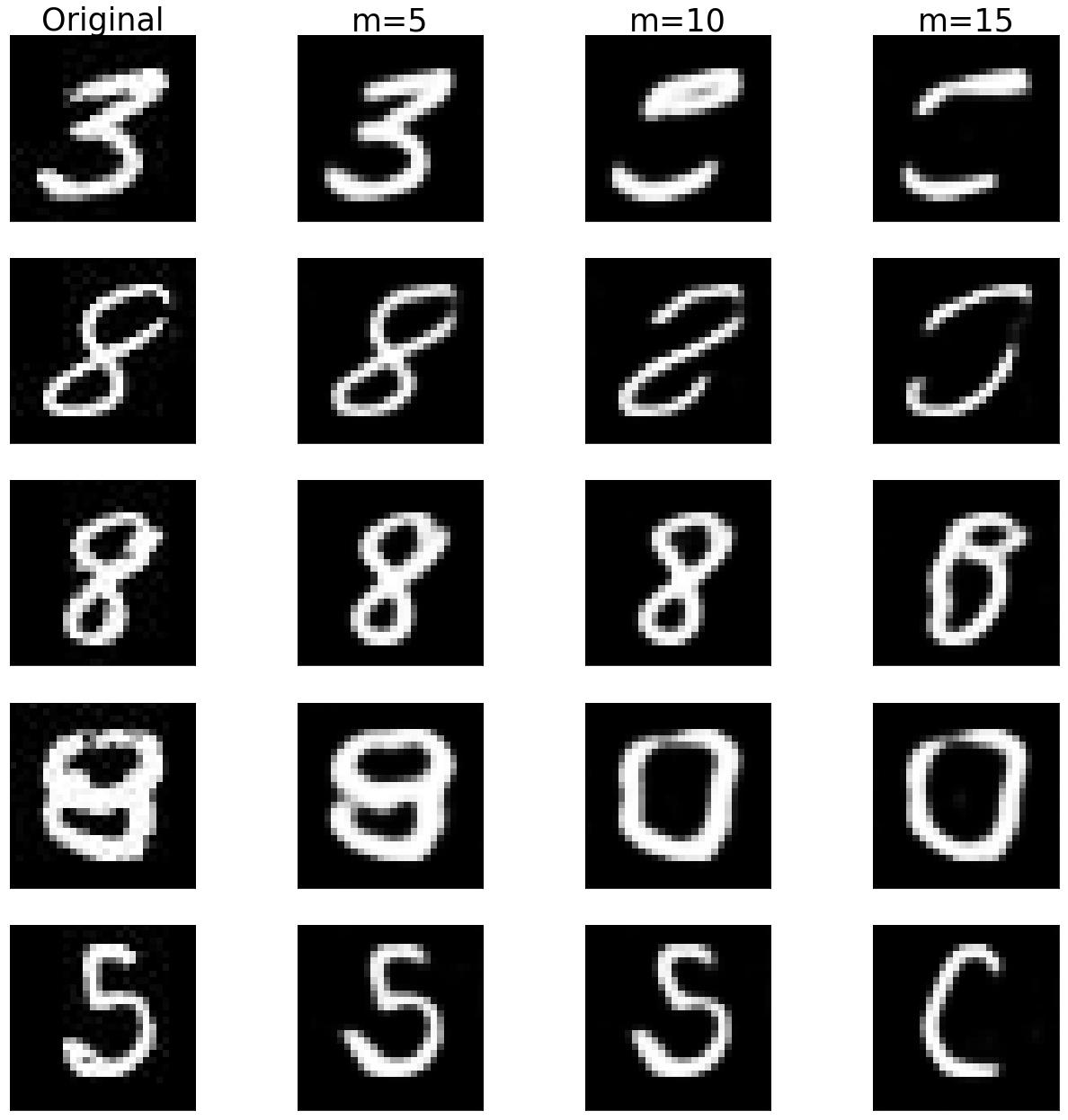} }
\end{tabular}
\caption{Inpainting. Left: CelebA reconstructions for $m = 10$. Middle-Left: DAE-PGD CelebA reconstructions for different $m$. Middle-Right: MNIST reconstructions for $m = 5$. Right: DAE-PGD MNIST reconstructions for different $m$. }
\label{fig:inpaint_all}
\end{figure*}

\subsection{Super-resolution}
Super-resolution is the problem of recovering the original image from a smaller and lower-resolution version. We create this smaller and lower-resolution image by taking the spatial averages of $f \times f$ pixel values where $f$ is the ratio of downsampling. This results in blurring a $ f \times f$ region  followed by downsampling the image. We test our algorithm with $f = 2, 3, 4$ corresponding to $4\times, 9\times$, and $16\times$ smaller image sizes, respectively.

The reconstruction results are provided in \ref{fig:super_all}. We see that DAE-PGD provides higher quality reconstruction for $f=2$ for both CelebA and MNIST. Moreover, reconstruction quality degrades gracefully for CelebA for increasing values of $f$. However, in the case of MNIST, reconstruction quality degrades considerably when $f=4$. Noting that $f=4$ only gives 16 measurements (i.e. $y \in \R^{16})$, we hypothesize that $16$ measurements may not contain enough signal \footnote{Consider compressive sensing with sparsity constraints where recovery guarantees hold when $m \geq Cs \ln (\frac{N}{s})$ \cite{foucart2017mathematical}. } to accurately reconstruct the original images.

\begin{figure*}[thb]
\begin{tabular}{c c c c}
\centering
{\includegraphics[width = 4.2cm, height=4.2cm]{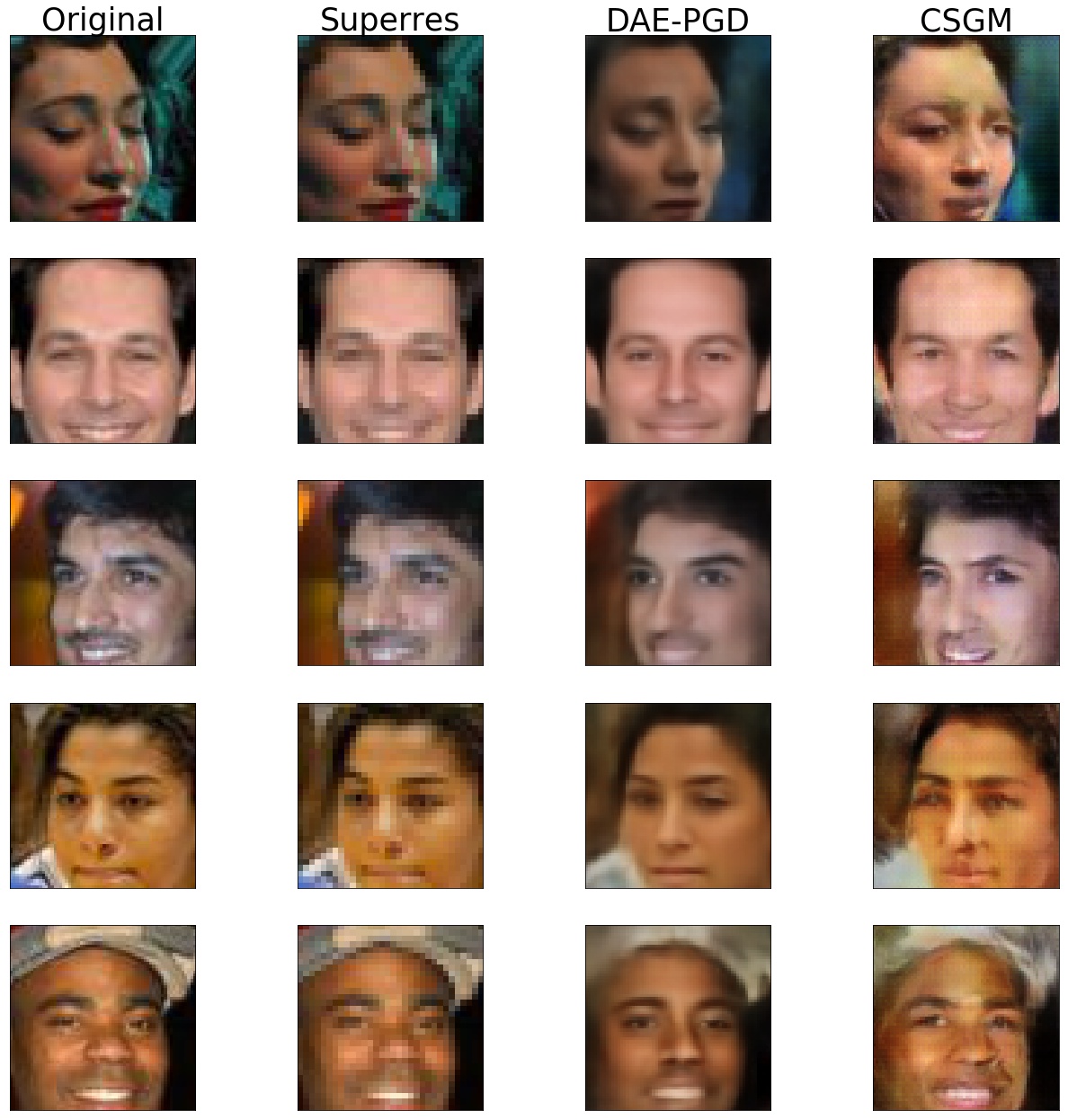}} & 
{\includegraphics[width = 4.2cm, height=4.2cm]{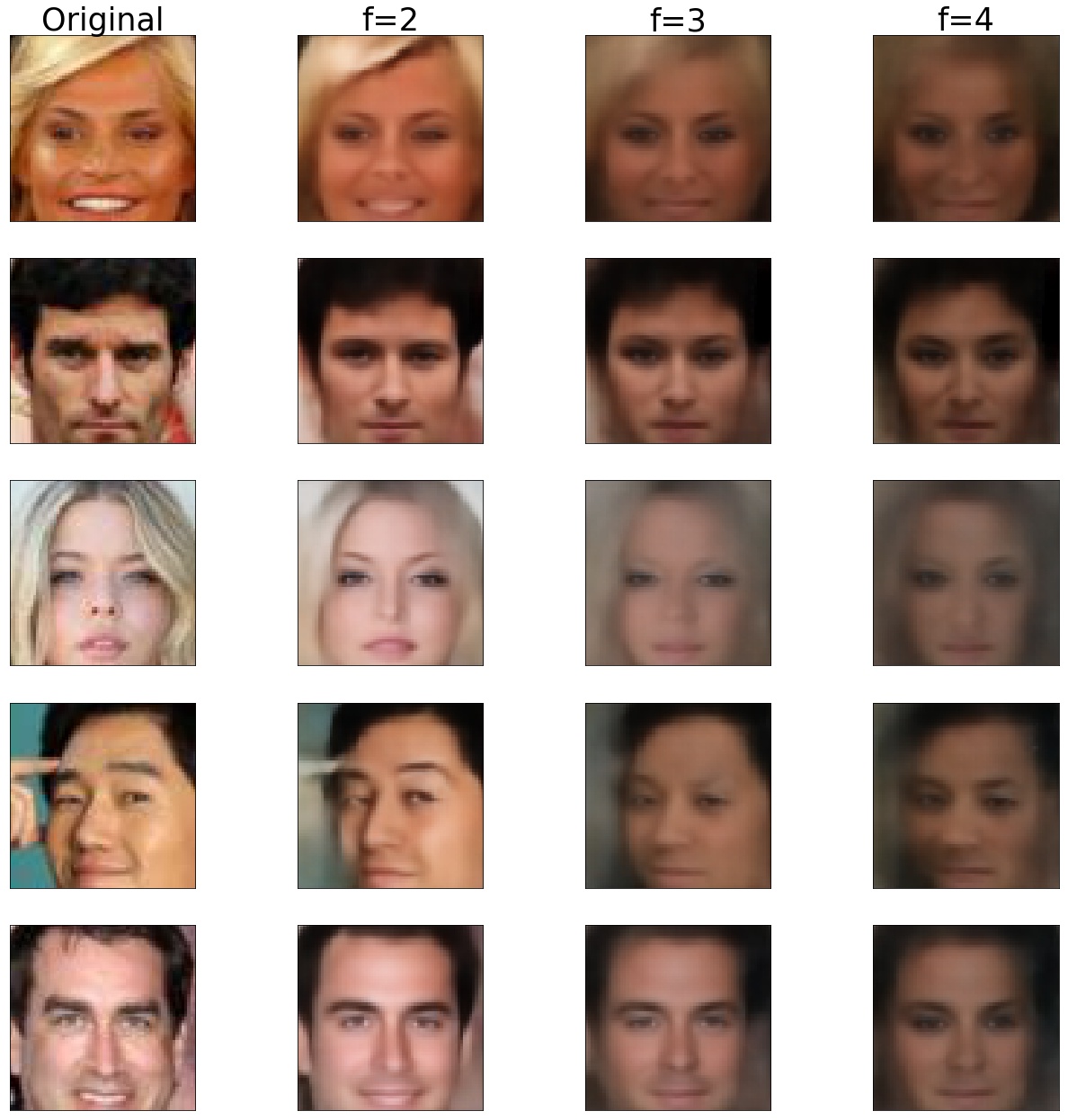}} &
{\includegraphics[width = 4.25cm, height=4.2cm]{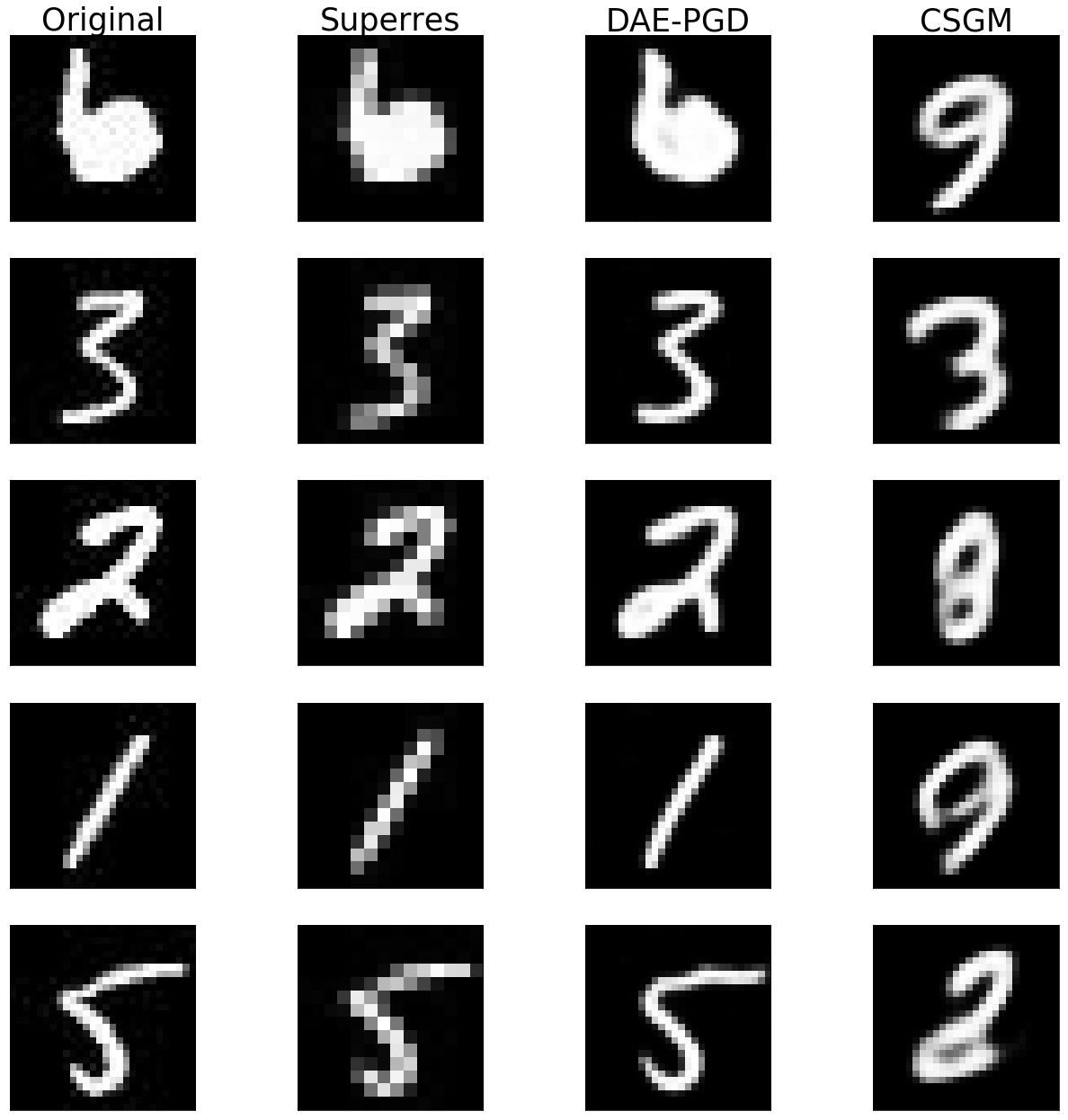}} &
{\includegraphics[width = 4.2cm, height=4.2cm]{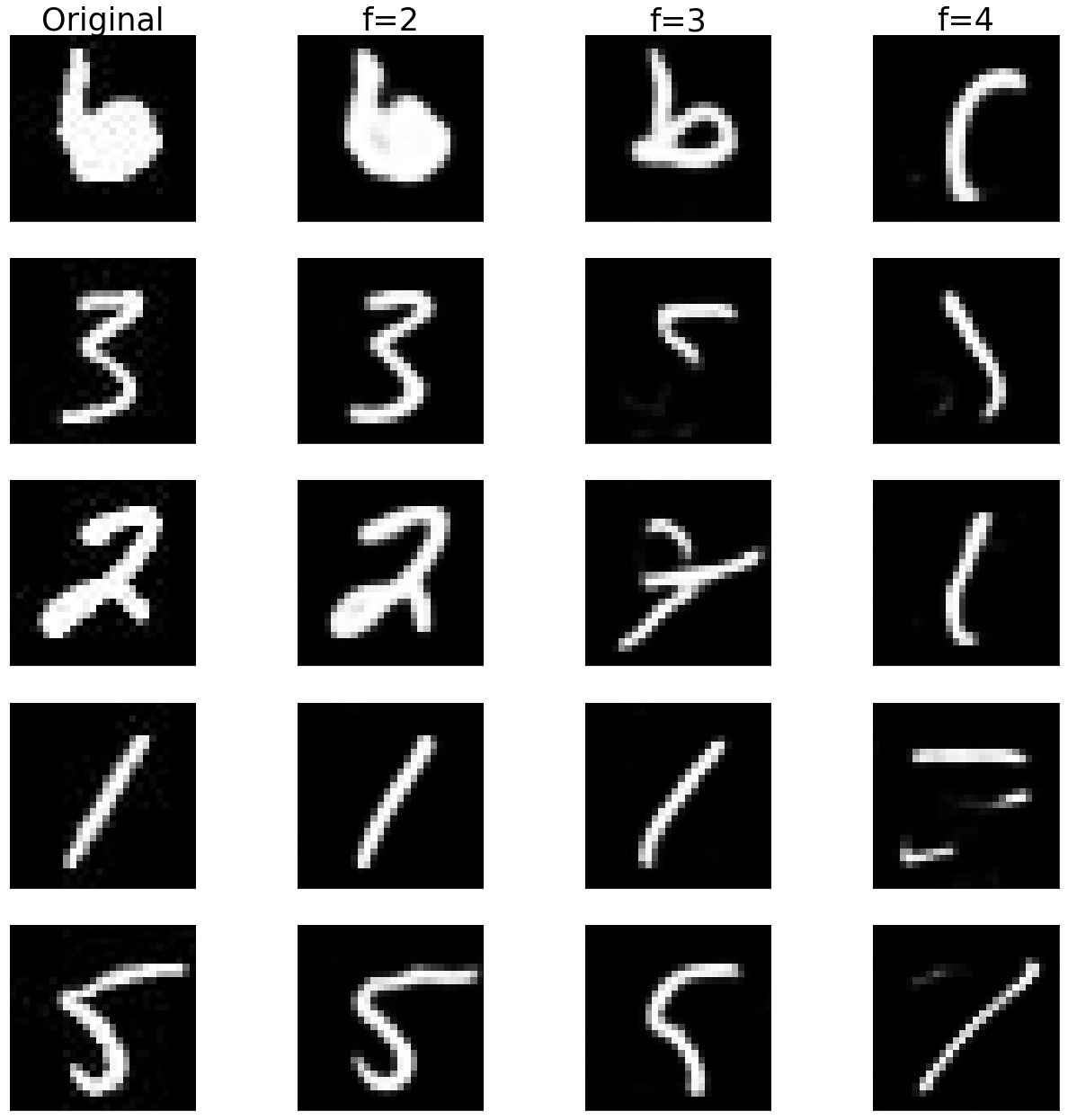} }
\end{tabular}
\caption{Super-resolution. Left: CelebA reconstructions for $f = 2$. Middle-Left: DAE-PGD CelebA reconstructions for different $f$. Middle-Right: MNIST reconstructions for $f = 2$. Right: DAE-PGD MNIST reconstructions for different $f$.}
\label{fig:super_all}
\end{figure*}

%\begin{figure}[h]
%\centering
%\includegraphics[width=5cm, height=5cm]{images/superres_mnist_diff_m.jpg}
%\caption{MNIST  Superres diff m}
%\label{fig:sr_mnist_m}
%\end{figure}

\section{Related Work}

\noindent \textbf{Compressive Sensing} The field of compressive sensing was essentially initiated with the work of \cite{candes2006robust} and  \cite{donoho2006compressed} where provided recovery results for sparse signals with a random measurement matrix.
Some of the earlier work in extending compressive sensing  to perform stable recovery with deterministic matrices was done by  \cite{candes2005decoding} and \cite{candes2006stable}, where a sufficient condition for recovery was satisfaction of a restricted isometry hypothesis. \cite{blumensath2009iterative} introduced IHT as an algorithm to recover sparse signals which was later modified in \cite{baraniuk2010model} to reduce the search space as long as the sparsity was structured. 

\noindent \textbf{Generative Priors} Following the lead of \cite{bora2017compressed}, there have been significant efforts to improve on previous recovery results using neural networks as generative models \cite{adler2017solving, fan2017inner,gupta2018cnn, liu2017image, mardani2018neural, metzler2017learned,mousavi2017deepcodec, rick2017one, shah2018solving,yeh2017semantic,raj2019gan,heckel2018deep}. One line of work \cite{jagatap2019algorithmic, heckel2018deep} extends the efforts of Bora et al. \cite{bora2017compressed} by using an untrained neural network $G$ and solving the optimization problem in \eqref{bora_recovery}. However, the optimization problem is highly non-convex and requires a large number of iterations with multiple restarts. Another line of work, \cite{mousavi2017deepcodec,mousavi2017learning} trains a neural network to model the transformation $f(y) = \hat{x}$ where $\hat{x}$ is the approximation to the original input $x$. This approach is limited as a) the inverse mapping is non-trivial to learn and b) will only work for a fixed measurement mechanism. Peng et al \cite{peng2020solving} follow the projected gradient descent method of \cite{shah2018solving} and replace the inner optimization step by two projection steps: 1) mapping the approximation of the gradient descent step into a latent space; 2) mapping the latent space vector back to the original space.

\noindent \textbf{DAEs in Linear Inverse Problems} DAEs have been previously used in image processing tasks such as image denoising \cite{wang2018learning, guo2019agem, Chang_2017_ICCV} and image super-resolution \cite{sonderby2016amortised} to yield good results. However, these approaches  utilize different recovery algorithms and none impose the DAE prior.   Wang et al. \cite{wang2018learning} use gradient descent to minimize the mean shift vector and the mean squared error at each update step. Guo et al. \cite{guo2019agem} utilize an Expectation Maximization style update step at each iteration to recover the original signal. Sonderby et al. \cite{sonderby2016amortised} deploy  Bayes-optimal  denoising to take a gradient step along the log-probability of the data distribution. Chang et. al  \cite{Chang_2017_ICCV} used the alternating direction method of multipliers to solve a Langrangian  formulation that  involves the prior as a constraint.

\section{Conclusion}
We introduced DAEs as priors for general linear inverse problems and provided experimental results for the problems of compressive sensing, inpainting, and super-resolution on the CelebA and MNIST datasets. Utilizing a projected gradient descent algorithm for recovery, we provided rigorous theoretical guarantees for our framework and showed that our recovery algorithm does not impose strict constraints on the learning rate and hence eliminates the need to tune hyperparameters. We compared our framework to state of the art methods experimentally and found that our recovery algorithm provided a speed up of over two orders of magnitude and an order of magnitude improvement in reconstruction quality. 

\begin{table}[t]
\begin{center}
\begin{tabular}{|c|c|c|c|c|}
\hline
{\bf Layer}  & {\bf C-K} & {\bf C-S} &{\bf M-K} & {\bf M-S}\\
\hline
Conv2D 1 & 9 $\times$ 9 & 2 & 5 $\times$  5& 2 \\
\hline
Conv2D 2 & 7 $\times$ 7 &2 &5 $\times$  5 &2 \\
\hline
Conv2D 3 &5$\times$ 5 &  2&3$\times$ 3 &  2\\
\hline
Conv2D 4 & 5 $\times$ 5 & 1& 3$\times$ 3& 1\\
\hline
TransConv2d 1 & 5 $\times$ 5 & 2 & 3$\times$ 3 & 1\\
\hline
TransConv2d 2 & 5 $\times$ 5 &2& 3$\times$ 3 &2 \\
\hline
TransConv2d 3 &7$\times$ 7 &  2&5 $\times$  5&  2\\
\hline
TransConv2d 4 & 9 $\times$ 9 & 1& 5 $\times$  5 & 2\\
\hline
\end{tabular}
\end{center}
\caption{ Network Architectures for CelebA and MNIST. C-K, C-S, M-K, and M-S report CelebA Kernel Sizes, CelebA Strides, MNIST Kernel Sizes, and MNIST strides respectively.}
\label{network_arc}
\end{table}

\bibliography{paper}

\begin{thebibliography}{41}
\providecommand{\natexlab}[1]{#1}
\providecommand{\url}[1]{\texttt{#1}}
\expandafter\ifx\csname urlstyle\endcsname\relax
  \providecommand{\doi}[1]{doi: #1}\else
  \providecommand{\doi}{doi: \begingroup \urlstyle{rm}\Url}\fi

\bibitem[Abadi et~al.(2015)Abadi, Agarwal, Barham, Brevdo, Chen, Citro,
  Corrado, Davis, Dean, Devin, Ghemawat, Goodfellow, Harp, Irving, Isard, Jia,
  Jozefowicz, Kaiser, Kudlur, Levenberg, Man\'{e}, Monga, Moore, Murray, Olah,
  Schuster, Shlens, Steiner, Sutskever, Talwar, Tucker, Vanhoucke, Vasudevan,
  Vi\'{e}gas, Vinyals, Warden, Wattenberg, Wicke, Yu, and
  Zheng]{tensorflow2015-whitepaper}
Abadi, M., Agarwal, A., Barham, P., Brevdo, E., Chen, Z., Citro, C., Corrado,
  G.~S., Davis, A., Dean, J., Devin, M., Ghemawat, S., Goodfellow, I., Harp,
  A., Irving, G., Isard, M., Jia, Y., Jozefowicz, R., Kaiser, L., Kudlur, M.,
  Levenberg, J., Man\'{e}, D., Monga, R., Moore, S., Murray, D., Olah, C.,
  Schuster, M., Shlens, J., Steiner, B., Sutskever, I., Talwar, K., Tucker, P.,
  Vanhoucke, V., Vasudevan, V., Vi\'{e}gas, F., Vinyals, O., Warden, P.,
  Wattenberg, M., Wicke, M., Yu, Y., and Zheng, X.
\newblock {TensorFlow}: Large-scale machine learning on heterogeneous systems,
  2015.
\newblock URL \url{http://tensorflow.org/}.
\newblock Software available from tensorflow.org.

\bibitem[Adler \& {\"O}ktem(2017)Adler and {\"O}ktem]{adler2017solving}
Adler, J. and {\"O}ktem, O.
\newblock Solving ill-posed inverse problems using iterative deep neural
  networks.
\newblock \emph{Inverse Problems}, 33\penalty0 (12):\penalty0 124007, 2017.

\bibitem[Alain \& Bengio(2014)Alain and Bengio]{alain2014regularized}
Alain, G. and Bengio, Y.
\newblock What regularized auto-encoders learn from the data-generating
  distribution.
\newblock \emph{The Journal of Machine Learning Research}, 15\penalty0
  (1):\penalty0 3563--3593, 2014.

\bibitem[Bandeira et~al.(2013)Bandeira, Dobriban, Mixon, and
  Sawin]{bandeira2013certifying}
Bandeira, A.~S., Dobriban, E., Mixon, D.~G., and Sawin, W.~F.
\newblock Certifying the restricted isometry property is hard.
\newblock \emph{IEEE transactions on information theory}, 59\penalty0
  (6):\penalty0 3448--3450, 2013.

\bibitem[Baraniuk et~al.(2010)Baraniuk, Cevher, Duarte, and
  Hedge]{baraniuk2010model}
Baraniuk, R.~G., Cevher, V., Duarte, M.~F., and Hedge, C.
\newblock Model-based compressive sensing.
\newblock \emph{IEEE Transactions on Information Theory}, 56\penalty0
  (4):\penalty0 1982--2001, 2010.

\bibitem[Bengio et~al.(2013)Bengio, Yao, Alain, and
  Vincent]{bengio2013generalized}
Bengio, Y., Yao, L., Alain, G., and Vincent, P.
\newblock Generalized denoising auto-encoders as generative models.
\newblock \emph{Advances in neural information processing systems},
  26:\penalty0 899--907, 2013.

\bibitem[Blumensath \& Davies(2009)Blumensath and
  Davies]{blumensath2009iterative}
Blumensath, T. and Davies, M.~E.
\newblock Iterative hard thresholding for compressed sensing.
\newblock \emph{Applied and computational harmonic analysis}, 27\penalty0
  (3):\penalty0 265--274, 2009.

\bibitem[Bora et~al.(2017)Bora, Jalal, Price, and Dimakis]{bora2017compressed}
Bora, A., Jalal, A., Price, E., and Dimakis, A.~G.
\newblock Compressed sensing using generative models.
\newblock \emph{arXiv preprint arXiv:1703.03208}, 2017.

\bibitem[Candes \& Tao(2005)Candes and Tao]{candes2005decoding}
Candes, E. and Tao, T.
\newblock Decoding by linear programming.
\newblock \emph{arXiv preprint math/0502327}, 2005.

\bibitem[Cand{\`e}s et~al.(2006)Cand{\`e}s, Romberg, and Tao]{candes2006robust}
Cand{\`e}s, E.~J., Romberg, J., and Tao, T.
\newblock Robust uncertainty principles: Exact signal reconstruction from
  highly incomplete frequency information.
\newblock \emph{IEEE Transactions on information theory}, 52\penalty0
  (2):\penalty0 489--509, 2006.

\bibitem[Candes et~al.(2006)Candes, Romberg, and Tao]{candes2006stable}
Candes, E.~J., Romberg, J.~K., and Tao, T.
\newblock Stable signal recovery from incomplete and inaccurate measurements.
\newblock \emph{Communications on Pure and Applied Mathematics: A Journal
  Issued by the Courant Institute of Mathematical Sciences}, 59\penalty0
  (8):\penalty0 1207--1223, 2006.

\bibitem[Chen et~al.(2010)Chen, Silva, Paisley, Wang, Dunson, and
  Carin]{chen2010compressive}
Chen, M., Silva, J., Paisley, J., Wang, C., Dunson, D., and Carin, L.
\newblock Compressive sensing on manifolds using a nonparametric mixture of
  factor analyzers: Algorithm and performance bounds.
\newblock \emph{IEEE Transactions on Signal Processing}, 58\penalty0
  (12):\penalty0 6140--6155, 2010.

\bibitem[Chollet(2015)]{chollet2015}
Chollet, F.
\newblock keras.
\newblock \url{https://github.com/fchollet/keras}, 2015.

\bibitem[Donoho et~al.(2006)]{donoho2006compressed}
Donoho, D.~L. et~al.
\newblock Compressed sensing.
\newblock \emph{IEEE Transactions on information theory}, 52\penalty0
  (4):\penalty0 1289--1306, 2006.

\bibitem[Fan et~al.(2017)Fan, Wei, Carin, and Heller]{fan2017inner}
Fan, K., Wei, Q., Carin, L., and Heller, K.~A.
\newblock An inner-loop free solution to inverse problems using deep neural
  networks.
\newblock \emph{Advances in Neural Information Processing Systems},
  30:\penalty0 2370--2380, 2017.

\bibitem[Foucart \& Rauhut(2017)Foucart and Rauhut]{foucart2017mathematical}
Foucart, S. and Rauhut, H.
\newblock \emph{A Mathematical Introduction to Compressive Sensing}.
\newblock 2017.

\bibitem[Guo et~al.(2019)Guo, Han, and Wen]{guo2019agem}
Guo, B., Han, Y., and Wen, J.
\newblock Agem: Solving linear inverse problems via deep priors and sampling.
\newblock \emph{Advances in Neural Information Processing Systems},
  32:\penalty0 547--558, 2019.

\bibitem[Gupta et~al.(2018)Gupta, Jin, Nguyen, McCann, and Unser]{gupta2018cnn}
Gupta, H., Jin, K.~H., Nguyen, H.~Q., McCann, M.~T., and Unser, M.
\newblock Cnn-based projected gradient descent for consistent ct image
  reconstruction.
\newblock \emph{IEEE transactions on medical imaging}, 37\penalty0
  (6):\penalty0 1440--1453, 2018.

\bibitem[Heckel \& Hand(2018)Heckel and Hand]{heckel2018deep}
Heckel, R. and Hand, P.
\newblock Deep decoder: Concise image representations from untrained
  non-convolutional networks.
\newblock \emph{arXiv preprint arXiv:1810.03982}, 2018.

\bibitem[Hou et~al.(2017)Hou, Shen, Sun, and Qiu]{hou2017deep}
Hou, X., Shen, L., Sun, K., and Qiu, G.
\newblock Deep feature consistent variational autoencoder.
\newblock In \emph{2017 IEEE Winter Conference on Applications of Computer
  Vision (WACV)}, pp.\  1133--1141. IEEE, 2017.

\bibitem[Jagatap \& Hegde(2019)Jagatap and Hegde]{jagatap2019algorithmic}
Jagatap, G. and Hegde, C.
\newblock Algorithmic guarantees for inverse imaging with untrained network
  priors.
\newblock In \emph{Advances in Neural Information Processing Systems}, pp.\
  14832--14842, 2019.

\bibitem[Kingma \& Ba(2014)Kingma and Ba]{kingma2014adam}
Kingma, D.~P. and Ba, J.
\newblock Adam: A method for stochastic optimization.
\newblock \emph{arXiv preprint arXiv:1412.6980}, 2014.

\bibitem[LeCun()]{lecun1998mnist}
LeCun, Y.
\newblock The mnist database of handwritten digits.
\newblock \emph{http://yann. lecun. com/exdb/mnist/}.

\bibitem[Liu et~al.(2017)Liu, Wen, Liu, Wang, and Huang]{liu2017image}
Liu, D., Wen, B., Liu, X., Wang, Z., and Huang, T.~S.
\newblock When image denoising meets high-level vision tasks: A deep learning
  approach.
\newblock \emph{arXiv preprint arXiv:1706.04284}, 2017.

\bibitem[Liu et~al.(2015)Liu, Luo, Wang, and Tang]{liu2015deep}
Liu, Z., Luo, P., Wang, X., and Tang, X.
\newblock Deep learning face attributes in the wild.
\newblock In \emph{Proceedings of the IEEE international conference on computer
  vision}, pp.\  3730--3738, 2015.

\bibitem[Mardani et~al.(2018)Mardani, Sun, Donoho, Papyan, Monajemi,
  Vasanawala, and Pauly]{mardani2018neural}
Mardani, M., Sun, Q., Donoho, D., Papyan, V., Monajemi, H., Vasanawala, S., and
  Pauly, J.
\newblock Neural proximal gradient descent for compressive imaging.
\newblock In \emph{Advances in Neural Information Processing Systems}, pp.\
  9573--9583, 2018.

\bibitem[Metzler et~al.(2017)Metzler, Mousavi, and
  Baraniuk]{metzler2017learned}
Metzler, C., Mousavi, A., and Baraniuk, R.
\newblock Learned d-amp: Principled neural network based compressive image
  recovery.
\newblock In \emph{Advances in Neural Information Processing Systems}, pp.\
  1772--1783, 2017.

\bibitem[Mousavi \& Baraniuk(2017)Mousavi and Baraniuk]{mousavi2017learning}
Mousavi, A. and Baraniuk, R.~G.
\newblock Learning to invert: Signal recovery via deep convolutional networks.
\newblock In \emph{2017 IEEE international conference on acoustics, speech and
  signal processing (ICASSP)}, pp.\  2272--2276. IEEE, 2017.

\bibitem[Mousavi et~al.(2017)Mousavi, Dasarathy, and
  Baraniuk]{mousavi2017deepcodec}
Mousavi, A., Dasarathy, G., and Baraniuk, R.~G.
\newblock Deepcodec: Adaptive sensing and recovery via deep convolutional
  neural networks.
\newblock \emph{arXiv preprint arXiv:1707.03386}, 2017.

\bibitem[Peng et~al.(2020)Peng, Jalali, and Yuan]{peng2020solving}
Peng, P., Jalali, S., and Yuan, X.
\newblock Solving inverse problems via auto-encoders.
\newblock \emph{IEEE Journal on Selected Areas in Information Theory},
  1\penalty0 (1):\penalty0 312--323, 2020.

\bibitem[Peyre(2010)]{peyre2010best}
Peyre, G.
\newblock Best basis compressed sensing.
\newblock \emph{IEEE Transactions on Signal Processing}, 58\penalty0
  (5):\penalty0 2613--2622, 2010.

\bibitem[Raj et~al.(2019)Raj, Li, and Bresler]{raj2019gan}
Raj, A., Li, Y., and Bresler, Y.
\newblock Gan-based projector for faster recovery with convergence guarantees
  in linear inverse problems.
\newblock In \emph{Proceedings of the IEEE International Conference on Computer
  Vision}, pp.\  5602--5611, 2019.

\bibitem[Rick~Chang et~al.(2017{\natexlab{a}})Rick~Chang, Li, Poczos,
  Vijaya~Kumar, and Sankaranarayanan]{rick2017one}
Rick~Chang, J., Li, C.-L., Poczos, B., Vijaya~Kumar, B., and Sankaranarayanan,
  A.~C.
\newblock One network to solve them all--solving linear inverse problems using
  deep projection models.
\newblock In \emph{Proceedings of the IEEE International Conference on Computer
  Vision}, pp.\  5888--5897, 2017{\natexlab{a}}.

\bibitem[Rick~Chang et~al.(2017{\natexlab{b}})Rick~Chang, Li, Poczos,
  Vijaya~Kumar, and Sankaranarayanan]{Chang_2017_ICCV}
Rick~Chang, J.~H., Li, C.-L., Poczos, B., Vijaya~Kumar, B. V.~K., and
  Sankaranarayanan, A.~C.
\newblock One network to solve them all -- solving linear inverse problems
  using deep projection models.
\newblock In \emph{Proceedings of the IEEE International Conference on Computer
  Vision (ICCV)}, Oct 2017{\natexlab{b}}.

\bibitem[Shah \& Hegde(2018)Shah and Hegde]{shah2018solving}
Shah, V. and Hegde, C.
\newblock Solving linear inverse problems using gan priors: An algorithm with
  provable guarantees.
\newblock In \emph{2018 IEEE international conference on acoustics, speech and
  signal processing (ICASSP)}, pp.\  4609--4613. IEEE, 2018.

\bibitem[S{\o}nderby et~al.(2016)S{\o}nderby, Caballero, Theis, Shi, and
  Husz{\'a}r]{sonderby2016amortised}
S{\o}nderby, C.~K., Caballero, J., Theis, L., Shi, W., and Husz{\'a}r, F.
\newblock Amortised map inference for image super-resolution.
\newblock \emph{arXiv preprint arXiv:1610.04490}, 2016.

\bibitem[Tibshirani(1996)]{tibshirani1996regression}
Tibshirani, R.
\newblock Regression shrinkage and selection via the lasso.
\newblock \emph{Journal of the Royal Statistical Society: Series B
  (Methodological)}, 58\penalty0 (1):\penalty0 267--288, 1996.

\bibitem[Vincent et~al.(2008)Vincent, Larochelle, Bengio, and
  Manzagol]{vincent2008extracting}
Vincent, P., Larochelle, H., Bengio, Y., and Manzagol, P.-A.
\newblock Extracting and composing robust features with denoising autoencoders.
\newblock In \emph{Proceedings of the 25th international conference on Machine
  learning}, pp.\  1096--1103, 2008.

\bibitem[Vincent et~al.(2010)Vincent, Larochelle, Lajoie, Bengio, Manzagol, and
  Bottou]{vincent2010stacked}
Vincent, P., Larochelle, H., Lajoie, I., Bengio, Y., Manzagol, P.-A., and
  Bottou, L.
\newblock Stacked denoising autoencoders: Learning useful representations in a
  deep network with a local denoising criterion.
\newblock \emph{Journal of machine learning research}, 11\penalty0 (12), 2010.

\bibitem[Wang et~al.(2018)Wang, Liu, Zhou, and Wang]{wang2018learning}
Wang, Y., Liu, Q., Zhou, H., and Wang, Y.
\newblock Learning multi-denoising autoencoding priors for image
  super-resolution.
\newblock \emph{Journal of Visual Communication and Image Representation},
  57:\penalty0 152--162, 2018.

\bibitem[Yeh et~al.(2017)Yeh, Chen, Yian~Lim, Schwing, Hasegawa-Johnson, and
  Do]{yeh2017semantic}
Yeh, R.~A., Chen, C., Yian~Lim, T., Schwing, A.~G., Hasegawa-Johnson, M., and
  Do, M.~N.
\newblock Semantic image inpainting with deep generative models.
\newblock In \emph{Proceedings of the IEEE conference on computer vision and
  pattern recognition}, pp.\  5485--5493, 2017.

\end{thebibliography}
\bibliographystyle{icml2019}

%%%%%%%%%%%%%%%%%%%%%%%%%%%%%%%%%%%%%%%%%%%%%%%%%%%%%%%%%%%%%%%%%%%%%%%%%%%%%%%
%%%%%%%%%%%%%%%%%%%%%%%%%%%%%%%%%%%%%%%%%%%%%%%%%%%%%%%%%%%%%%%%%%%%%%%%%%%%%%%
% DELETE THIS PART. DO NOT PLACE CONTENT AFTER THE REFERENCES!
%%%%%%%%%%%%%%%%%%%%%%%%%%%%%%%%%%%%%%%%%%%%%%%%%%%%%%%%%%%%%%%%%%%%%%%%%%%%%%%
%%%%%%%%%%%%%%%%%%%%%%%%%%%%%%%%%%%%%%%%%%%%%%%%%%%%%%%%%%%%%%%%%%%%%%%%%%%%%%%
%\appendix
%\section{Do \emph{not} have an appendix here}

%\textbf{\emph{Do not put content after the references.}}
%

%%%%%%%%%%%%%%%%%%%%%%%%%%%%%%%%%%%%%%%%%%%%%%%%%%%%%%%%%%%%%%%%%%%%%%%%%%%%%%%
%%%%%%%%%%%%%%%%%%%%%%%%%%%%%%%%%%%%%%%%%%%%%%%%%%%%%%%%%%%%%%%%%%%%%%%%%%%%%%%

\end{document}